\newtheorem{theorem}{Theorem}
\newtheorem{lemma}{Lemma}
\newtheorem{remark}{Remark}
\newtheorem{definition}{Definition}
\title{A Minimal Incentive-based Demand Response Program With Self Reported Baseline Mechanism}
\author{
Deepan Muthirayan$^a$,
Enrique Baeyens$^b$,
Pratyush Chakraborty$^c$,\\
Kameshwar Poolla$^d$ and
Pramod P. Khargonekar$^a$
\thanks{This research is supported by the National Science Foundation under
grants EAGER-1549945, CPS-1646612, CNS-1723856 and by the National Research
Foundation of Singapore under a grant to the Berkeley Alliance for Research
in Singapore}
\thanks{$^a$ Department of Electrical Engineering and Computer Science,
        University of California, Irvine, CA, USA}%
\thanks{$^b$ Instituto de las Tecnolog\'{\i}as Avanzadas de la Producci\'on,
Universidad de Valladolid, Valladolid, Spain
        }%
\thanks{$^c$ Department of Physics and Astronomy,
        Northwestern University, IL, USA
        }
\thanks{$^d$ Department of Electrical Engineering and Computer Science,
        University of California, Berkeley, CA, USA}%
}
\date{}
\begin{document}

\maketitle

\begin{abstract}


In this paper, we propose a novel incentive based Demand Response (DR) program
with a self reported baseline mechanism. The System Operator (SO) managing the DR program recruits consumers or aggregators of DR resources. The recruited consumers are required to only report their baseline, which is the minimal information necessary for any DR program. During a DR event, a set of  consumers, from this pool of recruited consumers, are randomly selected. The consumers are selected such that the required load reduction is delivered. The selected consumers, who reduce their load, are rewarded for their services and other recruited consumers, who deviate from their reported baseline, are penalized. The randomization in selection and penalty ensure that the \emph{baseline inflation} is controlled. We also justify that the selection probability can be simultaneously used to control SO's cost. This allows the SO to design the mechanism such that its cost is almost optimal when there are no recruitment costs or atleast significantly reduced otherwise. Finally, we also show that the proposed method of self-reported baseline outperforms other baseline estimation methods commonly used in practice.
\end{abstract}
\begin{IEEEkeywords}
Demand Response, Baseline Estimation, Baseline Inflation.
\end{IEEEkeywords}

\section{Introduction}

Demand Response (DR) programs \cite{Albadi2008} are potentially powerful tools to modulate
the demand for electricity in a wide variety of situations. For example, at certain times such as mid-afternoons on hot summer
days, the supply of additional electric power is scarce and expensive. At these
times, it is more cost-effective to reduce demand than to increase supply to
maintain power balance. Another scenario is a grid with high renewable
penetration. Here, DR promises to be a better alternative compared to other expensive and
polluting reserves to balance the variability in renewable generation. Realizing its potential, the 2005 Energy Policy Act
provided the Congressional mandate to promote DR in organized
wholesale electricity markets. The FERC
order 745 \cite{federal2011demand} met this mandate by prescribing that demand
response resource owners should be allowed to offer their demand reduction as
if it were a supply resource rather than a bid to reduce demand so that the
market operates fairly.

Dynamic pricing based DR programs \cite{Joskow2012, chakraborty2017distributed} can ideally
achieve market efficiency, but they require more complex metering and communication infrastructure to achieve this which raises their implementation costs \cite{mathieu2013residential, borenstein2002dynamic}.
Furthermore, consumers may not be responsive to dynamic pricing
\cite{faruqui2011dynamic}. Alternatively, consumers could be signaled to reduce
consumption and paid for their load reductions. Such schemes are referred to as
Incentive-based DR programs or Demand Reduction programs. There are three key
components of any incentive-based DR program: (a) a baseline against which
demand reduction is measured, (b) a payment scheme for agents who reduce their
consumption from the baseline, and (c) various contractual clauses such as
limits on the frequency of DR events or penalties for nonconforming agents.

Thus, incentive-based DR programs require an established baseline against which
consumer's load reduction is measured. The baseline is an estimate of the
consumption when the consumer is not participating in the DR
program.
For example, the California Independent System Operator (CAISO) uses the
average of the consumption on the ten most recent non-event days as the
baseline estimate~\cite{CaisoDR2017}.
The CAISO method also uses a morning adjustment factor to
account for any variability in consumption pattern during the day of the DR
event from the past. Current methods to establish the baseline raise several concerns. One major concern is that the consumers have an incentive to artificially inflate their baseline to increase
their profits \cite{chao2010price,wolak2007residential,chao2013incentive, vuelvas2017rational}.
Cases have been reported where the participants artificially inflated their
baseline for increasing payments \cite{gaming-examples}. Fairness can also
be a concern. Consider, for instance, an agent who happens to be on vacation
during a DR event and receives a payment for load reduction without suffering
any hardship. This can be perceived as unfair by other agents who deliberately
curtail their consumption and suffer some disutility.

\subsection{Our Contribution}

We propose a setting where the System Operator (SO) recruits DR
providers as an alternate resource to balance supply and demand during high
price periods. The providers could be either individual consumers or
aggregators of DR services. We also assume that the SO has access to market
outcomes, which is a reasonable assumption. The objective of the SO is to minimize cost when energy purchase from the wholesale energy market is expensive. This usually happens during
peak load scenarios, when the market price exceeds a \emph{threshold market
clearing} (TMC) price. The TMC price is the price above which it is profitable for the SO to call the recruited DR providers or consumers to provide load reduction.

The main aspects of the DR mechanism we propose are: (i) self-reported baseline (ii) randomized selection of consumers, and (iii) penalty for uninstructed deviations. In this mechanism, the consumers are required to self-report their baselines and are paid at a pre-determined reward for every unit of reduction they provide. A large group of consumers is recruited so that the necessary
load reduction is delivered reliably. When a DR event occurs, consumers are selected randomly from this pool of recruited consumers to provide the required service. The load reduction is measured by the difference between the self-reported baseline and the measured consumption. The consumers signaled to reduce are paid in proportion to the measured reduction and the prescribed reward. The consumers who are not called are penalized for uninstructed deviation from the baseline. This penalty and randomized selection controls baseline inflation. The proposed DR program requires only baseline information from the individual consumers, which makes it minimal in terms of the information it elicits from the consumers.

In this paper, we characterize baseline inflation for a quadratic utility function with
uncertain consumption and a quadratic penalty function with and without a
deadband. The deadband in the penalty function is required to achieve
individual rationality. Using this characterization, we show that the proposed mechanism
controls baseline inflation. We also justify that by choosing an appropriate calling probability, which depends on the recruitment cost, the SO can signficantly reduce its costs. Finally, we show that the self-reported baseline establishes a better estimate of the mean baseline when compared to conventional methods such as  the CAISO's m/m method~\cite{CaisoDR2017}. Since the excess payments made to the consumers are proportional to the baseline used in a DR program, this establishes that the self-reported baseline approach is more cost effective than the CAISO approach.

Two concerns can arise with the self-reported baseline idea. One is the fatigue in reporting a baseline and the other is the lack of knowledge of one's own baseline. Notwithstanding, self-reported baseline is still a viable method. This is because, firstly, the proposed mechanism is for peak load scenarios which are rare events. Secondly, we expect a energy management system to manage the load consumption pattern of a consumer in the future. Given the consumer's preferences, this energy management system should have the capability to estimate the baseline and report it to the operator or the load serving entity. In addition to all of the above aspects, the self-reported baseline DR mechanism can also avoid bias and inflation in its estimate of the baseline.

\subsection{Related Work}

There exists substantial literature on baseline estimation methods \cite{coughlin2008estimating, grimm2008evaluating, mathieu2011quantifying, wijaya2014bias, nolan2015challenges, weng2015probabilistic, zhang2016cluster, nolan2015challenges, zhou2016forecast, hatton2016statistical, wang2018synchronous}. These can be broadly classified into three classes: (a) averaging, (b) regression, and (c) control group methods.

\emph{Averaging methods} determine baselines by averaging the consumption on
past days that are similar (\emph{e.g.,} in weather conditions) to the event
day.
A detailed comparison of  different averaging methods is offered in
\cite{coughlin2008estimating, grimm2008evaluating, wijaya2014bias}. Averaging
methods are simple but they suffer from estimation biases
\cite{wijaya2014bias, weng2015probabilistic, nolan2015challenges}, and require a significant
amount of data, especially for residential DR programs \cite{nolan2015challenges}.

\emph{Regression methods} estimate a load prediction model based on historical data which is then
used to predict the baseline \cite{zhou2016forecast, mathieu2011quantifying}.
They can potentially overcome biases incurred by averaging methods
\cite{mathieu2011examining, nolan2015challenges}. But they often require considerable historical data
for acceptable accuracy, and the models may not capture the complex behavior of
individual consumers.

\emph{Control group methods} have been suggested to have better accuracy than averaging or regression methods and do not require large amounts of historical data \cite{hatton2016statistical}. However these methods require the SO to recruit an additional set of consumers and also install additional metering infrastructure for these consumers. In addition, prior data based analysis, to identify the most appropriate control group, might be required depending on the control group method deployed. This raises their costs of implementation \cite{hatton2016statistical}. We also show later that the adverse incentives to inflate still persists in this method. Compared to all the above methods, the proposed {\it self-reported baseline} avoids all of these issues, \emph{i.e.} (i) bias and inflation, (ii) need for historical data, and (iii) high implementation cost.

In order to avoid baseline estimation, in a previous
work~\cite{muthirayan2017mechanism}, we addressed the DR problem as a mechanism design
problem. The setting considered in [23] has an aggregator and an Utility or SO. The Utility determines the required load reduction $D$ kWh that is to be delivered by the aggregator based on the system requirements.
The aggregator recruits consumers to deliver the required reduction. The mechanism that we proposed for the aggregator requires the consumers to report both their marginal utility and their baseline consumption. The aggregator uses the marginal utility reports to select consumers such that its overall cost is minimized while the load reduction target $D$ is met. A drawback of this mechanism in terms of implementation is that the consumers need not have knowledge of their true marginal utilities.

The new approach proposed here also avoids baseline estimation by requiring the consumers to report their baseline consumption, but the individual marginal utility need not be disclosed. The mechanism is minimal in terms of the information it elicits from the consumers because it does not require either historical data or any additional infrastructure. The authors in \cite{vuelvas2018limiting} use a similar problem formulation to ours and propose an incentive based DR mechanism, but do not address how the reward price is set, how the consumers are selected so that the DR service is delivered reliably and the cost aspect of the mechanism. In addition they also ignore any randomness in the consumption of the consumers. Here, we consider all of the above aspects and the randomness in consumption. We also provide comparison with other baseline estimation methods.

While some parallels can be drawn with dynamic pricing based DR mechanisms \cite{jacquot2018,muratori2016,yoon2014}, the setting we consider here is different. These mechanisms essentially influence consumers by using time varying prices to alter their energy consumption so that system objectives are met. On the contrary, the central problem we consider is to recruit DR resources that can deliver a certain amount of load reduction at certain times of a month which coincides with peak load conditions. This requires the estimation of consumer baseline because measuring load reduction requires a baseline. Hence baseline estimation becomes a primary concern in our setting whereas such a requirement does not arise in the dynamic pricing DR setting.

The remainder of this paper is organized as follows. In Section \ref{sec:setup}, we introduce the consumer model and the incentive-based self-reported DR program. In Section \ref{sec:opt-for}, we solve for the optimal consumer forecast and characterize baseline inflation for a quadratic utility function and a quadratic penalty with and without deadband. In section \ref{sec:socost} we discuss SO's cost. In Section \ref{sec:CAISO-comp}, we compare self-reported baseline with other conventional baseline estimation methods.
Finally, we conclude in Section~\ref{sec:conclusion}.

\section{Problem Setup}\label{sec:setup}

In this section, we describe the market model, the consumer model and the
incentive-based DR program. A summary of the notations is given in
Table~\ref{tab:notation}.

\begin{table}\caption{Notation}
\begin{center}\small
\begin{tabular}{cl}
\toprule
$q$ & Energy consumption of consumer\\
$\theta$ & Exogenous random variable\\
$u$ & Utility of consumer expressed in monetary units\\
$\pi_0$ & Retail price of energy \\
$\pi_2$ & Reward/kWh awarded to consumer $k$\\
$f$ & Baseline report of consumer\\
$\pi^*$ & Threshold Market Clearing Price (TMC)\\
$p$ & Probability of consumer being signaled \\
$R$ & Reward function for load reduction \\
$\Phi$ & Penalty function for deviation from baseline \\
$\Pi$ & Inverse supply function\\
$Q_0$ &  Peak load  \\
\bottomrule
\end{tabular}
\end{center}
\label{tab:notation}
\end{table}

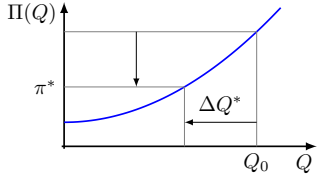
\begin{figure}
\centering
\scalebox{0.8}{
\begin{tikzpicture}[scale=4]
      \draw[->,>=latex, thick] (-0.01,0) -- (1.05,0) node[below,pos=.95] {$Q$};
      \draw[->, >=latex, thick] (0,-0.01) -- (0,0.6);
      \draw[domain=0:0.9,smooth,variable=\x,blue,thick] plot ({\x},{0.1+\x*\x/1.7});
      \draw[help lines] (0,{0.1+0.5*0.5/1.7}) -- (0.5,{0.1+0.5*0.5/1.7});
      \draw[help lines] (0,{0.1+0.8*0.8/1.7}) -- (0.8,{0.1+0.8*0.8/1.7});
      \draw[help lines] (0.5,0) -- (0.5 ,{0.1+0.5*0.5/1.7});
      \draw[help lines] (0.8,0) -- (0.8 ,{0.1+0.8*0.8/1.7});
      \draw[<-,>=latex] (0.5,0.1) -- node[above] {$\Delta Q^*$} (0.8,0.1);
      \draw[<-,>=latex] (0.3,{0.1+0.5*0.5/1.7}) -- (0.3,{0.1+0.8*0.8/1.7});
      \path (0.8,0) node[below] {$Q_0$};
      \path (0,0.55) node[left] {$\Pi(Q)$};
      \path (0,{0.1+0.5*0.5/1.7}) node[left] {$\pi^*$};
\end{tikzpicture}
}
\caption{Inverse Supply Curve and Threshold Market Clearing (TMC) Price $\pi^*$}
\label{fig:supcurv}
\end{figure}

\subsection{Market Model}
\label{sec:mar-mod}

The market model is represented by the wholesale market's
inverse supply function $\Pi(Q)$ which provides the energy price
as a function of the net energy transacted in the wholesale market,
see Fig.~\ref{fig:supcurv}.
The market inverse supply function is assumed to be convex with respect to $Q$ and monotone increasing, \emph{i.e.} with positive derivative $\Pi'(Q) > 0$. The \emph{threshold market clearing price} $\pi^*$ (TMC price), as defined earlier, is the market price above which it is profitable for the SO to call the consumers. Given the inverse supply curve of the market, this price can be computed a priori. In scenarios where the inverse supply function is not available a priori, the SO can estimate it using data from past twelve months. This is typical of many system operators such as the CAISO which publishes threshold market clearing prices for the next target month using data from past twelve months. The assumption we make is that this estimate is reflective of the true TMC price.

Note that the calculation of the TMC price from the inverse supply function may not be straightforward for a network model. This would require a detailed analysis of how the congestion constraints influence the Locational Marginal Prices (LMPs) of the nodes and is model specific. The main results of this paper will still hold provided $\pi^*$, \emph{i.e.} the threshold market clearing price for a node in the network, is determined via the network model. It is beyond the scope of this paper to discuss in detail the determination of this price under such more complicated settings. The main goal of this paper is to illustrate the idea of self-reported baseline and its effectiveness.

\subsection{Consumer Model}
\label{sec:con-mod}

Consider a residential consumer whose consumption is denoted by $q$. Let
$\theta$ be a random variable that is drawn from a continuous distribution. The utility of consumption of a consumer depends on this random variable. We
assume that the distribution of $\theta$ includes every possible source of
uncertainty. For example, $\theta$ could represent the consumer's state where
the consumer could either be at home or not. It could also model the
randomness induced due to external weather conditions like temperature. Let the
private utility function which is expressed in monetary units be $u(q,\theta)$, which is assumed to be a strictly concave monotone increasing in $q$. We also assume that the random variable $\theta$ is realized at the time when consumption is accomplished. Define the marginal utility $\mu(q,\theta)$ as follows:
\begin{equation}
\mu(q,\theta) =  \frac{\partial u(q, \theta)}{\partial q}.
\label{eq:margutil}
\end{equation}
Note that since $u(q, \theta)$ is monotone increasing and strictly concave in $q$,
we have:
\begin{equation*}
\forall q: \quad
\mu(q, \theta) > 0, \quad
\frac{\partial \mu(q, \theta)}{\partial q} < 0.
\end{equation*}

\subsection{Incentive-Based Demand Response Program}
\label{sec:mech}

The SO signals a DR event when the market price exceeds the TMC price. The novel DR program that we propose comprises a {\it self-reported baseline mechanism}. The mechanism has two stages which are as follows.

\paragraph*{{\bf Stage 1} (Reporting)} In this stage, the consumer self reports its baseline $f$ and the SO announces the following quantities:
\begin{enumerate}
\item the probability $p$ of calling a consumer,
\item the reward function $R(\pi_2,f,q)$ for reducing consumption $(f-q)$,
\item reward per unit reduction $\pi_2$ which is equal to the TMC price $\pi^*$,
\item the penalty function $\Phi(f,q)$ for consumers who deviate
from their reported baseline when they are not called for DR service.
\end{enumerate}
This penalty function $\Phi$ is critical to ensure that the consumers do not inflate their
baseline report. At the same time, the penalty should not discourage
participation by preventing lack of profitability for the participants.
Based on the reward per unit reduction $\pi_2$ and the penalty $\Phi(f,q)$,
each consumer submits the baseline report $f$.

\paragraph*{{\bf Stage 2} (DR Event)} In the second stage, a DR event is
triggered when the SO expects the market price to shoot above the TMC price. The SO then selects
randomly from the pool of recruited consumers and the selected consumers are signaled to reduce consumption.
The SO observes the aggregated consumption $Q$ of those selected consumers.
By the mechanism, the consumers who are signaled and reduce consumption
are paid $\pi_2$ per unit of reduction. However, those recruited consumers
that are not signaled are penalized for deviating from their reported
baseline as prescribed by the penalty function.

\subsubsection{Consumer Recruitment and Selection}

The objective of SO is to minimize its cost during DR events.
During a DR event, the load is at its peak $Q_0$, and is desirable to achieve
a load reduction of $\Delta Q^*$, which is the optimal load reduction
(Refer~Fig.~\ref{fig:supcurv}). The SO recruits $n$ sets of consumers. The recruitment is such that each set of consumers reduces load by $\Delta Q^*$ for the specified reward/kWh $\pi^*$. The consumers are tested before they are recruited. Here, the assumption is that the aggregate load reduction can be more reliably established than the individual load reduction which requires a reliable baseline estimate. Since the probability of selection or calling of each individual consumer is restricted to probability $p$, the number $n$ of such sets of consumers recruited satisfies $np =1$. When a DR event occurs, one set is randomly chosen and its members are signaled to reduce consumption. This recruitment and selection process ensures that one set is always chosen. Hence, the required level of reduction $\Delta Q^*$ is delivered during all DR events while satisfying the calling probability of each recruited consumer.


It is inconceivable that each set of consumers will exactly deliver $\Delta Q^*$ amount of reduction at the prescribed reward. Hence, in the proposed mechanism, the SO is allowed to adjust the selected consumers within the DR event window. If the price remains higher than the TMC price within the DR event then the SO can call more consumers till the price falls to the desired level. Note that this does require the SO to recruit some set of consumers who can respond on short notice. Such type of consumers can be recruited under the \emph{flexible resource} category.

Here, we provide a very simple example to illustrate how the consumers are grouped and selected. Consider the case where the consumers are identical and have a capacity to deliver
$0.5$ kWh of reduction when paid at $\pi^* = \$ 0.05/\text{kWh}$. Let the probability of calling a consumer be $p = 0.1$ and the optimal load reduction $\Delta Q^* = 10 \ \text{kWh}$. Then the SO would recruit $n = 1/p = 10$ groups each with a capacity to deliver $10 \ \text{kWh}$ of reduction when paid at $\pi^* = \$ 0.05/\text{kWh}$. This implies that each of these groups would contain $20$ such consumers and each of these groups will be called or selected by the probability $p = 0.1$ when a DR event occurs.

\begin{remark}
As stated earlier, for an incentive-based demand response program, determining the right baseline is very important as baseline can not be measured. In our mechanism, consumers self-report their baseline. No other information from consumers is needed other than baseline report. Hence, our mechanism is minimal in the information it elicits from the consumers.
\end{remark}

\subsubsection{Reward and Penalty Function}

The reward function in the mechanism is set as
\begin{equation}
R(\pi_2,f,q) =
\left\lbrace\begin{array}{ll}
\pi_2(f - q), & \text{if consumer is called,} \\
0, & \text{otherwise.}
\end{array} \right.
\label{eq:rew-fn}
\end{equation}

Thus, the SO pays the consumers according to the measured reduction
$f-q$, where $f$ is the consumer's baseline report and $q$ is the measured
consumption during the DR event. The reward per unit reduction is $\pi_2$. The consumer's penalty function is specified as follows,
\begin{equation}
\Phi(f,q) = \left\lbrace
\begin{array}{ll}
	0, & \text{if consumer is called,} \\
	\phi(f-q), & \text{otherwise.}
\end{array} \right.
\label{eq:pen-fn}
\end{equation}
where the penalty function $\phi$ in \eqref{eq:pen-fn} is chosen to be convex,
symmetric, and nonnegative with minimum value zero at the origin,
\emph{i.e.} it satisfies the following conditions:
\begin{equation}\label{eq:phicon}
\phi(0) = \phi'(0) = 0, \ \forall x: \phi(x)=\phi(-x), \
\phi''(x) > 0,
\end{equation}
where $\phi'$ and $\phi''$ denote the first and second derivative of the
penalty function $\phi$.

\subsection{Consumer's Optimization Problem}

The minimum expected cost incurred by a consumer is a function of
the baseline report $f$ and is given by
\begin{equation}\label{eq:H}
H(f) = {\mathds{E}_{\theta}}
\left[
\min_{q}
\left\{\pi_0q - u(q, \theta)  + \Phi(f,q) - R(\pi_2,f,q)\right\}
\right].
\end{equation}

The \emph{consumer's problem} is formulated as follows:
\begin{equation}
\label{opt-CP-1}
\text{\bf CP:}~~
\min_f ~ H(f).
\end{equation}

Hence, the consumer's problem is a two stage stochastic decision problem.
In the first stage the consumer decides the optimal baseline
report $f$, and in the second stage decides the optimal consumption $q$.
\begin{definition}
Let $f^*$ be defined as the baseline report that minimizes the cost that is incurred by the consumer, \emph{i.e.} $f^* = \arg \min~ H(f)$.
\label{eq:opt-f}
\end{definition}

\section{Optimal baseline report and inflation}
\label{sec:opt-for}

In this section, we derive an optimality condition for the consumer baseline
report that minimizes the expected cost of the consumer. The optimality condition has a
nice economic interpretation because it establishes that the baseline
report that minimizes the expected cost is such that the marginal utility
of the consumers equals the retail price of the electricity.

The consumer's optimization problem \textbf{CP} given by (\ref{opt-CP-1})
is a two-step stochastic decision problem. We characterize
consumers's consumption decisions corresponding to the second stage problem and
then obtain the optimality condition for the consumer's baseline
report by solving the first stage problem.

\subsection{The Consumer's Second Stage Problem}
\label{sec:secstage-decision}

The consumer has several choices. It can decide to participate or not to
participate in the DR program. If it decides to participate, then it can be
signaled to reduce its consumption or not signaled. This gives rise to three
possible scenarios for the second stage: {a)} consumer is not participating in
the DR program, {b)} consumer is participating in the program but is not
signaled to reduce consumption, {c)} consumer is participating in the program
and is signaled to reduce consumption. We obtain the optimal consumption for
each of the three cases assuming that the baseline report $f$ is given.  The
consumption when the consumer is not participating corresponds to the true
baseline. Hence, we use this value as the baseline to characterize inflation in
the DR program.

\paragraph{Consumer is not participating in the DR program}
In this case, $R = 0$ and $\Phi = 0$. Let $J^a(q,\theta)$ denote the
realized cost function for this case. It is then given by
\begin{equation} \label{eq:Ja}
J^a(q,\theta) = \pi_0 q - u(q,\theta),
\end{equation}
where $\pi_0$ is the retail price of electricity.
The optimal consumption is given by
\[ q^a(\theta) = \arg\min_q J^a(q, \theta), \]
which is a function of $\theta$ because its value is realized when
the consumption decision is made. Note that $q^a(\theta)$ is the solution of
the first order optimality condition,
\begin{equation}
\pi_0 - \frac{\partial u(q, \theta)}{\partial q} = 0.
\label{eq:opt-con-np}
\end{equation}
Hence, $q^a(\theta)$ is given by
\begin{equation}\label{eq:opt-1}
q^a(\theta) = \mu^{-1}(\pi_0,\theta),
\end{equation}
where $\mu^{-1}$ denotes the inverse function of the marginal utility,
see \eqref{eq:margutil}, that always exists for every $\theta$. Moreover,
since the consumer's utility is monotone increasing and concave in
$q$, the consumption $q^a(\theta)$ is always nonnegative.

\paragraph{Consumer is participating in the program but is not signaled
to reduce consumption}
The reward and penalty functions are given by \eqref{eq:rew-fn}
and \eqref{eq:pen-fn}. Let $J^b(f,q)$ denote the realized cost function
which is given by
\begin{equation}\label{eq:Jb}
J^b(f, q,\theta) = \pi_0 q - u(q,\theta) + \phi(f-q).
\end{equation}
As before, the value of $\theta$ is realized when the consumption decision is
made. In this scenario the realized cost is also a function of
the baseline report $f$ in addition to  the consumption decision and the
value of $\theta$.  The optimal consumption is given by
\[q^b(f, \theta) = \arg\min_q J^b(f, q, \theta), \]
and so it satisfies the first order optimality condition,
\begin{equation}
 \pi_0 - \frac{\partial u(q, \theta)}{\partial q} - \phi'(f-q) = 0.
\label{eq:opt-con-p-ndr}
\end{equation}
Hence, the optimal consumption satisfies the following implicit equation,
\begin{equation}\label{eq:qb}
q^b(f,\theta) = \mu^{-1}(\pi_0 - \phi'(f - q^b(f,\theta)),\theta),
\end{equation}
and $q^b(f,\theta)$ is also a function of $f$ because the deviation from $f$
incurs a penalty.

\paragraph{Consumer is participating in the program and is signaled to
reduce consumption}
Again, the reward and penalty functions are given by equations
\eqref{eq:rew-fn} and \eqref{eq:pen-fn}, respectively.
Let $J^c(q,\theta)$ denote the realized cost function which is given by
\begin{equation}\label{eq:Jc}
J^c(q,f,\theta) = \pi_0 q - u(q,\theta) - \pi_2(f-q).
\end{equation}
The optimal consumption is given by
\[q^c(f, \theta) = \arg\min_q J^c(f, q, \theta).\]
So $q^c(f, \theta)$ is the solution of
\begin{equation}
\pi_0 - \frac{\partial u(q, \theta)}{\partial q} + \pi_2 = 0.
\label{eq:opt-con-p-dr}
\end{equation}
Hence, the optimal consumption $q^c$ is independent of $f$ and is given by
\begin{equation}
q^c(\theta) = \mu^{-1}(\pi_0 + \pi_2,\theta).
\label{eq:opt-con-p-dr-exp}
\end{equation}

The relation between the consumptions for the three different cases
$q^a(\theta)$, $q^b(\theta,f)$, $q^c(\theta)$ and the consumer's baseline
report $f$ are stated in the following lemma.

\begin{lemma} \label{lem:qa-qb-f}
The optimal consumptions for the three cases $q^a(\theta)$, $q^b(\theta,f)$,
$q^c(\theta)$ satisfy the conditions (i) $q^c < q^a$ and (ii) $q^a < q^b < f$ or $f < q^b < q^a$ for every $\theta$.
\end{lemma}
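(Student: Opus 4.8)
The plan is to handle the two parts separately: part (i) follows directly from strict monotonicity, while part (ii) comes down to an intermediate-value argument applied to the first-order condition defining $q^b$. For part (i), strict concavity of $u(\cdot,\theta)$ makes $\mu(\cdot,\theta)$ strictly decreasing, hence $\mu^{-1}(\cdot,\theta)$ is strictly decreasing in its first argument. Since $\pi_2 = \pi^*$ is a positive price, $\pi_0+\pi_2 > \pi_0$, so comparing \eqref{eq:opt-con-p-dr-exp} with \eqref{eq:opt-1} gives $q^c(\theta) = \mu^{-1}(\pi_0+\pi_2,\theta) < \mu^{-1}(\pi_0,\theta) = q^a(\theta)$ at once.

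For part (ii), I would fix $\theta$ and introduce $g(q) := \pi_0 - \mu(q,\theta) - \phi'(f-q)$. First I would observe that $q^b(f,\theta)$ is the \emph{unique} zero of $g$: the objective $J^b(f,\cdot,\theta)$ in \eqref{eq:Jb} is strictly convex in $q$ (sum of the strictly convex $-u(\cdot,\theta)$ and the convex $\phi$), so its minimizer is pinned down by the stationarity condition \eqref{eq:opt-con-p-ndr}, which is exactly $g(q)=0$. Next, $g$ is strictly increasing because $g'(q) = -\partial_q\mu(q,\theta) + \phi''(f-q) > 0$ by the sign conditions on $\mu$ and by \eqref{eq:phicon}. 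Then I would evaluate $g$ at $q^a(\theta)$ and at $f$: using $\mu(q^a(\theta),\theta)=\pi_0$ from \eqref{eq:opt-con-np}, one gets $g(q^a(\theta)) = -\phi'(f-q^a(\theta))$ and $g(f) = \pi_0 - \mu(f,\theta)$. Since $\phi'$ is strictly increasing with $\phi'(0)=0$, the sign of $g(q^a(\theta))$ is opposite to that of $f-q^a(\theta)$; since $\mu(\cdot,\theta)$ is strictly decreasing with $\mu(q^a(\theta),\theta)=\pi_0$, the sign of $g(f)$ matches that of $f-q^a(\theta)$. Hence, whenever $f \neq q^a(\theta)$, $g$ takes opposite signs at $q^a(\theta)$ and $f$, so by strict monotonicity its unique zero $q^b(f,\theta)$ lies strictly between them, yielding $q^a(\theta) < q^b(f,\theta) < f$ when $f > q^a(\theta)$ and $f < q^b(f,\theta) < q^a(\theta)$ when $f < q^a(\theta)$.

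I do not anticipate a real obstacle; the argument is essentially sign bookkeeping. The only subtlety is that $q^b$ is specified only implicitly through \eqref{eq:qb}, so it must first be identified as the unique stationary point of a strictly convex cost before the monotone-sign comparison can be invoked. I would also note the degenerate case $f = q^a(\theta)$, where $g(q^a(\theta))=0$ forces $q^b(f,\theta)=q^a(\theta)=f$; thus the strict inequalities in (ii) should be read as holding whenever the reported baseline differs from the realized true baseline $q^a(\theta)$, which is precisely the case of interest for characterizing inflation.
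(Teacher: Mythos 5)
Your proof is correct and follows essentially the same route as the paper's: part (i) from strict monotonicity of $\mu^{-1}(\cdot,\theta)$ together with $\pi_2>0$, and part (ii) from the convexity structure of $J^b$ --- your monotone first-order-condition sign argument is an explicit rendering of the paper's one-line observation that the minimizer of $U_1+U_2$, with $U_1(q)=\pi_0 q-u(q,\theta)$ minimized at $q^a(\theta)$ and $U_2(q)=\phi(f-q)$ minimized at $q=f$, must lie between those two minimizers. Your version additionally yields strict betweenness and correctly flags the degenerate case $f=q^a(\theta)$ (where $q^b=q^a=f$ and the strict inequalities of the lemma cannot hold), a point the paper's statement and proof gloss over.
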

\begin{proof}
Refer Appendix.
\end{proof}

As a result of Lemma~\ref{lem:qa-qb-f}, a rational consumer that is participating in the DR program and is signaled always provides a load reduction with respect to its true baseline consumption $q^a$.
However, according to this lemma, a consumer that is participating and not signaled for reduction may inflate its consumption near to its inflated baseline report to avoid the penalty and gain from the inflated baseline when called for reduction. This behaviour needs to be controlled.

\subsection{The Optimal Baseline Report}

Let $p$ denote the probability that the consumer is signaled to reduce
when a DR event occurs.  The expected cost that is incurred by the
consumer (\ref{eq:H}) can be expressed in terms of the probability $p$ as
follows:
\begin{equation}
H(f) =
p\mathds{E}_{\theta} J^c(f, q^c, \theta)  +
(1-p)\mathds{E}_\theta J^b(f, q^b, \theta),
\label{eq:exp-cost}
\end{equation}
and it follows that the optimal baseline report $f^*$ minimizes this $H(f)$. In the following lemma, we show that the consumer's expected cost $H(f)$ is a
convex function of its argument $f$.

\begin{lemma} \label{lem:conv-cost}
The consumer's expected cost $H(f)$ is a (strictly) convex function of its
argument $f$ if and only if the penalty $\phi$ is (strictly) convex.
\end{lemma}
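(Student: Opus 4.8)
The plan is to exploit the decomposition \eqref{eq:exp-cost}. First I would observe that the first term is affine in $f$: by \eqref{eq:opt-con-p-dr-exp} the optimal consumption $q^c(\theta)$ does not depend on $f$, and once $q=q^c$ is fixed $J^c(f,q^c,\theta)=\pi_0 q^c-u(q^c,\theta)-\pi_2(f-q^c)$ is affine in $f$; hence $p\,\mathds{E}_\theta J^c(f,q^c,\theta)$ contributes nothing to (strict) convexity. Consequently $H$ is (strictly) convex in $f$ if and only if $g(f):=\mathds{E}_\theta V(f,\theta)$ is, where $V(f,\theta):=\min_q J^b(f,q,\theta)=J^b(f,q^b(f,\theta),\theta)$ is the second-stage value function of an unsignaled consumer.

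For the ``if'' direction I would argue by joint convexity. When $\phi$ is convex, $J^b(f,q,\theta)=\pi_0 q-u(q,\theta)+\phi(f-q)$ is jointly convex in $(f,q)$: $\pi_0 q$ is linear, $-u(q,\theta)$ is convex in $q$ because $u$ is strictly concave, and $\phi(f-q)$ is convex in $(f,q)$ as a convex function composed with an affine map. Partial minimization of a jointly convex function preserves convexity, so $V(\cdot,\theta)$ is convex for every $\theta$; taking expectations and adding back the affine term keeps $H$ convex.

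The joint-convexity argument is not sharp enough for the strict statement (the joint function is affine along the direction $(1,1)$ in the $(f,q)$-plane), so I would establish strictness and, simultaneously, the ``only if'' direction through an explicit second-derivative formula. Differentiating the inner first-order condition \eqref{eq:opt-con-p-ndr} with respect to $f$ via the implicit function theorem gives $\partial q^b/\partial f=\phi''(f-q^b)/\big(\phi''(f-q^b)-\mu'(q^b,\theta)\big)$, where the denominator is exactly the second-order condition $\partial_q^2 J^b>0$ and is therefore positive. The envelope theorem gives $\partial_f V(f,\theta)=\phi'(f-q^b(f,\theta))$, and differentiating once more yields
\begin{equation*}
\partial_f^2 V(f,\theta)=\phi''(f-q^b)\Big(1-\frac{\partial q^b}{\partial f}\Big)=\frac{-\mu'(q^b,\theta)\,\phi''(f-q^b)}{\phi''(f-q^b)-\mu'(q^b,\theta)}.
\end{equation*}
Since $\mu'<0$ by strict concavity of $u$ and the denominator is positive, $\operatorname{sign}\partial_f^2 V(f,\theta)=\operatorname{sign}\phi''\big(f-q^b(f,\theta)\big)$; integrating, $H''(f)=(1-p)\,\mathds{E}_\theta[\partial_f^2 V(f,\theta)]$, so $\phi''\ge 0$ (resp.\ $>0$) everywhere forces $H''\ge 0$ (resp.\ $>0$), i.e.\ $H$ is (strictly) convex.

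For the converse I would run the same formula backwards: if $\phi$ fails to be (strictly) convex, then $\phi''$ is negative (resp.\ zero) on some interval, and for each fixed $\theta$ the map $f\mapsto f-q^b(f,\theta)$ is continuous and strictly increasing — its derivative $1-\partial q^b/\partial f=-\mu'/(\phi''-\mu')$ is positive — hence sweeps that interval, producing points where $\partial_f^2 V(f,\theta)<0$ (resp.\ $=0$). The main obstacle here is purely a bookkeeping one rather than a computational one: $\mathds{E}_\theta[\partial_f^2 V(f,\theta)]$ could in principle stay nonnegative even though $\partial_f^2 V(f,\theta)$ dips negative for some $\theta$, because the offending values of $f$ depend on $\theta$. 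I would resolve this by noting that the claimed equivalence is to hold for the consumer model as posed — equivalently, across all (in particular degenerate) distributions of $\theta$ — so a failure of convexity of $\phi$ on any interval is witnessed at some $(f,\theta)$ and hence destroys convexity of $H$; alternatively one simply imposes the regularity already assumed in \eqref{eq:phicon}. This converse step is the delicate part of the argument.
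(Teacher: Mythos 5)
Your proposal is correct and, at its core, identical to the paper's proof: both differentiate the inner first-order condition to get $\partial q^b/\partial f=\phi''(f-q^b)/\bigl(\phi''(f-q^b)-\tfrac{\partial^2 u}{\partial q^2}\bigr)\in[0,1)$, use the envelope theorem to obtain $H'(f)=(1-p)\,\mathds{E}_\theta\,\phi'(f-q^b)-p\pi_2$, and conclude $H''(f)=(1-p)\,\mathds{E}_\theta\bigl(1-\partial q^b/\partial f\bigr)\phi''(f-q^b)$, whose sign is governed by $\phi''$. Your auxiliary joint-convexity/partial-minimization argument for the non-strict direction and your explicit caveat about the converse (which the paper asserts without addressing the $\theta$-dependence of where $\phi''$ could dip) are reasonable additions but do not constitute a different route.
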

\begin{proof}
Refer Appendix.
\end{proof}

Since the penalty function $\phi$ was chosen to be convex, the consumer's expected cost $H(f)$ is also convex.
\begin{definition}[Consumer's Expected Marginal Utility]
The consumer's expected marginal utility under the incentive-based
self-reported DR program is given by
\begin{equation}
M(f) = p\mathds{E}_{\theta}  \frac{\partial u(q^c,\theta)}{\partial q} + (1 - p) \mathds{E}_{\theta} \frac{\partial u(q^b,\theta)}{\partial q}.
\end{equation}
\end{definition}

The consumer's expected marginal utility is a function of the baseline report $f$, because the consumption $q$ is a function of $f$. For example, if the consumer is participating in the DR program and is signaled, then its consumption is $q^b(f,\theta)$ which solves the implicit equation (\ref{eq:qb})
and does depend on $f$. The following theorem establishes the optimality condition for the optimal baseline report $f^*$ in terms of $M(f)$,
\begin{theorem} \label{lem:f-opt-cond}
The optimal baseline report $f^*$ satisfies $\pi_0 = M(f^*)$ and
is a global minimizer of the cost function $H(f)$. Moreover,
the minimizer is unique when $\phi$ is strictly convex.
\end{theorem}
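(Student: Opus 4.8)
The plan is to differentiate the expected cost $H$ in the decomposition \eqref{eq:exp-cost}, recognize the derivative as exactly $\pi_0 - M(f)$, and then close the argument with the convexity established in Lemma~\ref{lem:conv-cost}.

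First I would set up the regularity of the second-stage consumptions. The consumption $q^c(\theta)=\mu^{-1}(\pi_0+\pi_2,\theta)$ of \eqref{eq:opt-con-p-dr-exp} does not depend on $f$, while $q^b(f,\theta)$ is the unique minimizer of $q\mapsto J^b(f,q,\theta)$, which is strictly convex in $q$ because $u$ is strictly concave and $\phi$ is convex. Differentiating the first-order condition \eqref{eq:opt-con-p-ndr} with respect to $f$ and using that $-\partial_q\mu(q^b,\theta)+\phi''(f-q^b)>0$ never vanishes, the implicit function theorem shows $q^b(\cdot,\theta)$ is $C^1$ in $f$ (in fact with $\partial_f q^b\in(0,1)$, consistent with Lemma~\ref{lem:qa-qb-f}). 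Under a routine domination bound on the family $\{\phi'(f-q^b(f,\theta))\}_f$, one may then differentiate \eqref{eq:exp-cost} under the expectation.

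The core computation is an envelope-theorem step. Differentiating the $J^c$-term contributes $-p\pi_2$, since only $-\pi_2(f-q)$ depends on $f$; differentiating the $J^b$-term gives, by the chain rule, $(1-p)\,\mathds{E}_\theta\big[\big(\pi_0-\mu(q^b,\theta)-\phi'(f-q^b)\big)\,\partial_f q^b + \phi'(f-q^b)\big]$, and the coefficient of $\partial_f q^b$ vanishes because it is precisely the second-stage optimality condition \eqref{eq:opt-con-p-ndr}. Hence
\[
H'(f) = -p\pi_2 + (1-p)\,\mathds{E}_\theta\,\phi'\!\big(f - q^b(f,\theta)\big).
\]
On the other hand, substituting $\mu(q^c,\theta)=\pi_0+\pi_2$ and $\mu(q^b,\theta)=\pi_0-\phi'(f-q^b)$ from \eqref{eq:opt-con-p-dr} and \eqref{eq:opt-con-p-ndr} into the definition of $M$ yields $M(f)=\pi_0+p\pi_2-(1-p)\,\mathds{E}_\theta\,\phi'(f-q^b(f,\theta))$, so that $H'(f)=\pi_0-M(f)$ identically in $f$.

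It then remains to assemble the pieces. By Lemma~\ref{lem:conv-cost}, $H$ is convex, and strictly convex when $\phi$ is, so every stationary point is a global minimizer and, together with the identity $H'(f)=\pi_0-M(f)$, the minimizer $f^*$ is characterized by $\pi_0=M(f^*)$; strict convexity makes it unique. The step I expect to be the genuine obstacle is not any of this bookkeeping but the \emph{existence} of a stationary point, i.e.\ that $H$ actually attains its minimum: one must check that $f\mapsto f-q^b(f,\theta)$ is nondecreasing (which follows from $\phi'$ increasing and $\mu$ decreasing, and is where Lemma~\ref{lem:qa-qb-f} is used) and that $(1-p)\,\mathds{E}_\theta\,\phi'(f-q^b)$ sweeps past the value $p\pi_2$ as $f$ varies, so that $H'=\pi_0-M$ changes sign; convexity of $H$ then delivers a minimizer. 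Making this coercivity argument precise — in particular for penalties whose derivative grows slowly, where an extra smallness condition on $p\pi_2$ may be needed — is the delicate point; for quadratic $\phi$ it is immediate since $\phi'$ is unbounded.
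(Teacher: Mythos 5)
Your proposal is correct and follows essentially the same route as the paper: an envelope-theorem differentiation of \eqref{eq:exp-cost} giving $H'(f) = -p\pi_2 + (1-p)\,\mathds{E}_\theta\,\phi'(f-q^b)$, identification of this with $\pi_0 - M(f)$ via the second-stage first-order conditions \eqref{eq:opt-con-p-ndr} and \eqref{eq:opt-con-p-dr}, and then Lemma~\ref{lem:conv-cost} for global optimality and uniqueness. Your closing remark on existence/coercivity of the minimizer is a reasonable extra scruple, but the paper sidesteps it by taking $f^*$ as given in Definition~\ref{eq:opt-f}, so it is not part of the paper's argument.
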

\begin{proof}
Refer Appendix.
\end{proof}

The optimality condition obtained in Theorem~\ref{lem:f-opt-cond} has a nice interpretation from the classical consumer theory in economics~\cite{Krugman2012}. The optimal baseline report $f^*$ is such that the consumer's marginal utility equals the retail price of the electricity. Given $f^*$, the expected reward per unit of energy (in kWh) paid for the expected load reduction provided by a consumer is given by

\begin{equation}
\frac{\pi^*(f - \mathds E_{\theta} q^c(\theta))}
    {\mathds E_{\theta} q^a(\theta) - \mathds E_{\theta} q^c(\theta)}
= \pi^* + \pi^* \frac{\mathds E_{\theta}\delta f(\theta)}
                       {\mathds E_{\theta} q^a(\theta) -
    \mathds E_{\theta} q^c(\theta)},
\end{equation}
where $\delta f(\theta) = f-q^a(\theta)$ is defined to be the inflation of the baseline report. From the second term, we infer that the expected inflation of the baseline report should be small to avoid a large excess payment.

\subsection{Control of Baseline Inflation}

Here, we show that the penalty function in combination with randomized calling allows the SO to control the inflation of the optimal baseline report $\delta f^*(\theta) = f^*-q^a(\theta)$. First we establish that penalty is necessary and then show that with a penalty, the probability of calling $p$ provides us a lever to control baseline inflation.

\subsubsection{Optimal Baseline Report without Penalty}
In this case, the optimality condition for the optimal baseline report $f^*$ is given by
\begin{equation}
\frac{d H(f)}{d f} =
p \mathds{E}_{\theta}  \frac{d J^c(f, q^c, \theta)}{d f} = -p \pi_2.
\end{equation}
Since the sensitivity of the consumer's cost $H(f)$ is negative with respect to $f$,
it indicates that the consumer will report a very high baseline.

\subsubsection{Optimal Baseline Report with Penalty}

The introduction of a penalty function allows us to control the inflation in the
baseline report by adjusting the probability of calling. This result is shown in the following
lemma.

\begin{theorem} \label{lem:f-opt-novar}
Let the penalty function $\phi$ be a quadratic function such that $\forall x: \phi''(x) = 1/\lambda$. Then the measurable inflation in the optimal baseline report $\delta \tilde{f}^*(\theta) = f^* - q^b(f^*,\theta)$ satisfies
\begin{equation*}
\lim_{p\rightarrow 0} \mathbb E_{\theta} \delta \tilde{f}^*(\theta) = 0.
\end{equation*}
And when  $\frac{\partial u^2(q,\theta)}{\partial q^2} = -1/d$,
\begin{equation*}
\lim_{p\rightarrow 0} \mathbb E_{\theta} \delta f^*(\theta) = 0.
\end{equation*}
\end{theorem}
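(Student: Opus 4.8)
The plan is to reduce the first-order optimality condition of Theorem~\ref{lem:f-opt-cond}, namely $\pi_0 = M(f^*)$, to an explicit expression for $\mathds{E}_{\theta}\,\delta\tilde f^*(\theta)$ and then let $p\to 0$. First I would substitute the second-stage characterizations of the marginal utilities into $M(f^*)$: by \eqref{eq:opt-con-p-dr-exp} the signalled consumption gives $\partial u(q^c,\theta)/\partial q = \mu(q^c,\theta) = \pi_0 + \pi_2$, a constant, while by \eqref{eq:opt-con-p-ndr} the unsignalled consumption gives $\partial u(q^b,\theta)/\partial q = \mu(q^b,\theta) = \pi_0 - \phi'(f^*-q^b(f^*,\theta))$. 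Inserting these into the definition of $M$ and imposing $M(f^*)=\pi_0$, the $\pi_0$ contributions cancel and one is left with
\[
(1-p)\,\mathds{E}_{\theta}\big[\phi'\!\big(f^*-q^b(f^*,\theta)\big)\big] = p\,\pi_2 .
\]

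Next I would use that $\phi$ is quadratic with $\phi''\equiv 1/\lambda$ and $\phi'(0)=0$, so $\phi'(x)=x/\lambda$; then the left-hand side equals $(1-p)\,\mathds{E}_{\theta}[\delta\tilde f^*(\theta)]/\lambda$, which yields the closed form $\mathds{E}_{\theta}\,\delta\tilde f^*(\theta) = \lambda\,p\,\pi_2/(1-p)$. Letting $p\to 0$ gives the first limit. For each $p\in(0,1)$ the minimiser $f^*$ exists and is unique because a quadratic $\phi$ is strictly convex, by Lemma~\ref{lem:conv-cost} and Theorem~\ref{lem:f-opt-cond}, so the displayed identity is meaningful for every $p$.

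For the second claim I would pass from the measurable inflation $\delta\tilde f^*$ to the true inflation $\delta f^* = f^*-q^a$ pointwise in $\theta$. Since $\partial^2 u(q,\theta)/\partial q^2 = -1/d$ for all $\theta$, the marginal utility $\mu(\cdot,\theta)$ is affine in $q$ with slope $-1/d$; subtracting the defining relations $\mu(q^a,\theta)=\pi_0$ and $\mu(q^b,\theta)=\pi_0-\phi'(f^*-q^b)$ gives $q^b-q^a = d\,\phi'(f^*-q^b) = (d/\lambda)(f^*-q^b)=(d/\lambda)\,\delta\tilde f^*(\theta)$. Adding $f^*-q^b=\delta\tilde f^*(\theta)$ to both sides gives $\delta f^*(\theta) = (1+d/\lambda)\,\delta\tilde f^*(\theta)$, and taking expectations yields $\mathds{E}_{\theta}\,\delta f^*(\theta)=(\lambda+d)\,p\,\pi_2/(1-p)\to 0$ as $p\to 0$.

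The argument is essentially bookkeeping on top of Theorem~\ref{lem:f-opt-cond}, so I do not expect a genuine obstacle. The only delicate points are tracking the $(1-p)$ factor correctly through the cancellation in the optimality condition, and observing that the affine-marginal-utility identity holds for every realization of $\theta$ (the $-1/d$ curvature hypothesis is uniform in $\theta$), which is what licenses integrating the pointwise relation $\delta f^* = (1+d/\lambda)\,\delta\tilde f^*$ term by term.
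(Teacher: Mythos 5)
Your proposal is correct and follows essentially the same route as the paper: the key identity $(1-p)\,\mathds{E}_{\theta}\phi'(f^*-q^b(f^*,\theta)) = p\pi_2$ is exactly the relation the paper's proof invokes (established inside the proof of Theorem~\ref{lem:f-opt-cond}), and the second limit is obtained in both cases from the pointwise affine relation $q^b - q^a = (d/\lambda)(f^*-q^b)$. Your only addition is writing out the closed forms $\mathds{E}_{\theta}\delta\tilde f^* = \lambda p\pi_2/(1-p)$ and $\mathds{E}_{\theta}\delta f^* = (d+\lambda)p\pi_2/(1-p)$, which the paper defers to Theorem~\ref{th:f-opt-ub}.
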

\begin{proof}
Refer Appendix.
\end{proof}

For specific consumer utility and penalty functions, an explicit expression for
the expected baseline report inflation can be obtained. The following theorem
provides this expression for the case where the consumer's utility and the
penalty function are both quadratic.

\begin{theorem} \label{th:f-opt-ub}
Let the consumer's utility $u$ and the penalty function $\phi$ be quadratic
functions such that
\begin{enumerate}
\item[i)] $\forall (q,\theta): \frac{\partial u^2(q,\theta)}{\partial q^2} = -1/d$,
\item[ii)] $\forall x: \phi''(x) = 1/\lambda$,
\end{enumerate}
where $d$ and $\lambda$ are positive scalars,
then the expected inflation of the baseline report is given by
\begin{equation}
\mathds E_{\theta} \delta f^*(p) =
f^* - \mathds E q^a(\theta) =
\left(d+\lambda\right)\frac{p\pi_2}{1-p}.
\end{equation}
\end{theorem}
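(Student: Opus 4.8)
The plan is to exploit the two quadratic hypotheses to convert the (generally implicit) second‑stage first‑order conditions into \emph{affine} identities, and then to feed these into the first‑stage optimality condition $\pi_0 = M(f^*)$ supplied by Theorem~\ref{lem:f-opt-cond}.

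First I would record what the quadratic assumptions buy us. Hypothesis (i) says that $\mu(\cdot,\theta)=\partial u/\partial q$ is affine in $q$ with slope $-1/d$ for every $\theta$; hypothesis (ii) together with the normalization \eqref{eq:phicon} ($\phi(0)=\phi'(0)=0$) forces $\phi'(x)=x/\lambda$. Substituting into the optimality conditions \eqref{eq:opt-con-np}, \eqref{eq:opt-con-p-ndr} and \eqref{eq:opt-con-p-dr} gives $\mu(q^a,\theta)=\pi_0$, $\mu(q^b,\theta)=\pi_0-\phi'(f-q^b)$ and $\mu(q^c,\theta)=\pi_0+\pi_2$. Subtracting the $q^a$ relation from the $q^b$ relation and using the two known slopes yields the pointwise identity $\tfrac1d\bigl(q^b(f,\theta)-q^a(\theta)\bigr)=\tfrac1\lambda\bigl(f-q^b(f,\theta)\bigr)$, hence
\begin{equation*}
f-q^a(\theta)=\frac{d+\lambda}{\lambda}\,\bigl(f-q^b(f,\theta)\bigr)\qquad\text{for all }\theta,\ f ,
\end{equation*}
so that in particular $\delta\tilde f^*(\theta)=\tfrac{\lambda}{d+\lambda}\,\delta f^*(\theta)$, where $\delta\tilde f^*$ is as in Theorem~\ref{lem:f-opt-novar}.

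Next I would evaluate the expected marginal utility. Substituting $\mu(q^c,\theta)=\pi_0+\pi_2$ and $\mu(q^b,\theta)=\pi_0-\phi'(f-q^b)$ into the definition of $M(f)$ gives $M(f)=\pi_0+p\pi_2-(1-p)\,\mathds{E}_\theta\,\phi'\!\bigl(f-q^b(f,\theta)\bigr)$. Imposing $M(f^*)=\pi_0$ from Theorem~\ref{lem:f-opt-cond} and using $\phi'(x)=x/\lambda$ then yields $\mathds{E}_\theta\,\delta\tilde f^*(\theta)=\mathds{E}_\theta\bigl(f^*-q^b(f^*,\theta)\bigr)=\tfrac{\lambda p\pi_2}{1-p}$. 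Combining this with the affine identity of the previous paragraph,
\begin{equation*}
\mathds{E}_\theta\,\delta f^*(p)=\frac{d+\lambda}{\lambda}\,\mathds{E}_\theta\,\delta\tilde f^*(\theta)=(d+\lambda)\,\frac{p\pi_2}{1-p},
\end{equation*}
which is the asserted formula.

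I expect the only delicate point to be the use of $M(f^*)=\pi_0$: this rests on the first‑order/envelope argument already established in Theorem~\ref{lem:f-opt-cond} (because the second‑stage consumptions are themselves optimal, their derivatives with respect to $f$ drop out of $H'(f)$), so I would cite that theorem rather than rederive it, and merely verify that under the quadratic forms all the expectations appearing above are well defined and finite. Everything else is linear bookkeeping.
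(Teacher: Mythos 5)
Your proposal is correct and follows essentially the same route as the paper: both rest on the first-stage optimality condition of Theorem~\ref{lem:f-opt-cond} (equivalently $\mathds{E}_\theta\,\phi'(f^*-q^b)=\tfrac{p\pi_2}{1-p}$) combined with the affineness of $\mu$ and $\phi'$ under the quadratic hypotheses. The only cosmetic difference is that you derive a pointwise proportionality between $f-q^a$ and $f-q^b$ from the second-stage first-order conditions, whereas the paper works with expectations of $\mu$ and $\mu^{-1}$ directly; the resulting linear bookkeeping is the same.
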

\begin{proof}
Refer Appendix.
\end{proof}

The law of diminishing marginal utility establishes that the marginal
utility declines with increase in consumption~\cite{Krugman2012}.
In Theorem~\ref{th:f-opt-ub}, $1/d$ is the rate of diminishment of the
consumer's marginal utility, and it is a private feature of the consumer
that cannot be modified by the system operator.
Unlike $d$, $\lambda$ is a parameter of the DR program, because
it defines the quadratic penalty function, \emph{i.e.}
$\phi(x)=x^2/(2\lambda)$.
Hence, the SO can choose $\lambda$ in the  design of the DR program.
Since $\lambda > 0$, a lower bound for the expected inflation of
baseline report is obtained by setting $\lambda=0$,
\begin{equation}
\mathds E_{\theta} \delta f^*
= f^* - \mathds{E}_{\theta} q^a(\theta)
\geq \frac{dp\pi_2}{1-p}.
\end{equation}
Moreover, by choosing the parameter of the penalty function $\lambda$ to be
small enough, the expected inflation of the baseline report can be made
arbitrarily close to its lower bound. Note that this lower bound is a function of
$p$ and is decreasing with $p$. Consequently, by choosing $\lambda$ and $p$ to
be small the baseline inflation can be controlled.

Here, we provide a simple numerical example to validate the above results. In this example, $u = cq - (0.5/d)q^2$, where $c = \$.5/\text{kWh}$ and $d \in \{0.1,0.2,0.3,0.4\}$ in $(\$/\text{kWh}^2)^{-1}$. The retail price $\pi_0 = \$ 0.12/\text{kWh} $ and the TMC price $\pi^* = \$ 0.05/\text{kWh}$ and are typical values (Refer \cite{eia-gov}). These set of parameter values correspond to a typical price sensitivity value of $\sim -0.3$ \cite{king2003predicting, reiss2005household}. The penalty coefficient $\lambda = 0.1 \ (\$/\text{kWh}^2)^{-1}$. The probability $p$ is chosen to be $p = 0.1$. Table \ref{table:dfcomp} summarizes the simulation results and how it compares with the theoretical results for this example.

\begin{table}[h]
\centering
\caption{Baseline Inflation, $\delta f^*$}
\begin{tabular}{ccccc}
\toprule
$d$ & 0.1 & 0.2 & 0.3 & 0.4 \\
\midrule
$\delta f^*$ (theory) &  $0.0011$ & $0.0017$ & $0.0022$ &  $0.0028$\\
\midrule
$\delta f^*$ (simul.) & $0.0012$ & $0.0017$ & $0.0023$ &  $0.0028$\\
\bottomrule
\end{tabular}
\label{table:dfcomp}
\end{table}

\subsection{Ensuring Individual Rationality with a Deadband}

\begin{figure}
\centering
\scalebox{0.8}{
\begin{tikzpicture}[scale=1]
      \draw[->,>=latex,thick] (-2,0) -- (2,0) node[below] {$q$};
      \draw[->,>=latex,thick] (-2,0) -- (-2,2) node[left] {$\Phi(q)$};
      \draw[domain=0:1,smooth,variable=\x,blue, thick] plot ({.5+\x},{2*\x*\x});
      \draw[domain=-1:0,smooth,variable=\x,blue, thick] plot ({-.5+\x},{2*\x*\x});
      \draw[blue, line width=1pt] (-0.5,0) -- (0.5,0);
      \draw (0,0) -- (0,2);
      \draw[help lines] (-0.5,0) -- (-0.5,1.5);
      \draw[help lines] (0.5,0) -- (0.5,1.5);
      \draw[<->,>=latex] (-0.5,1) -- (0.5,1) node[pos=.7,above] {$2\epsilon$};
      \draw (0,0) -- (0,-0.1) node [below] {$f$};;
\end{tikzpicture}
}
\caption{Penalty function with deadband}
\label{fig:penalty}
\end{figure}
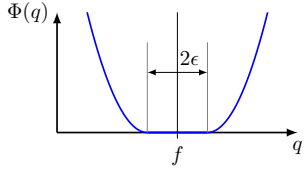

The DR program is not guaranteed to be individually rational from the
point of view of a single consumer because of the presence of uncertainty
$\theta$. When a consumer is not called, it consumes $q^b(f,\theta)$ which
varies with $\theta$ and is different from $f$, as it was shown in
Lemma~\ref{lem:qa-qb-f}.
As a result the consumer incurs a penalty and the mechanism is not
guaranteed to be individually rational. This is not an issue in the absence of
uncertainty. Individual rationality of the program can be ensured by
introducing a deadband in the penalty as illustrated in
Figure~\ref{fig:penalty}. A penalty with deadband can be expressed as,
\begin{equation}
\phi(f - q) =
\left\{
\begin{array}{ll}
\frac{(\vert f - q \vert - \epsilon)^2}{2\lambda}, &
\textrm{if}\ \vert f - q \vert \geq \epsilon, \\
0,  & \textrm{otherwise.}
\end{array} \right.
\label{eq:penalty-db}
\end{equation}

It is evident from such a design that there exists a deadband width $\epsilon$
such that the mechanism is individually rational. But a deadband worsens the
inflation in baseline. In the theorem below, we provide an upper bound for the
inflation in baseline when the penalty function has a deadband. The upper bound
explicitly proves that the baseline inflation can worsen with the introduction
of a deadband. But this trade off has to be made to guarantee individual
rationality.

\begin{theorem} \label{th:f-opt-ub-db}
Let the consumer's utility $u$ be a quadratic function such that
\begin{equation*}
\forall (q,\theta):
\frac{\partial u^2(q,\theta)}{\partial q^2} = -1/d.
\end{equation*}
The penalty function $\phi$ is defined in \eqref{eq:penalty-db},
where $d$ and $\lambda$
are positive scalars. Let $\max\{q_{\max} - \mathds{E}_\theta q^a(\theta), \mathds{E}_\theta q^a(\theta) - q_{\min} \}\leq \epsilon$, where $q^a(\theta) \in [q_{min}, q_{max}]$, then the expected inflation of the
baseline report is bounded by
\begin{equation}
\mathds E_{\theta} \delta f^*(p) =
f^* - \mathds E q^a(\theta) \leq
\left(d+\lambda\right)\frac{p\pi_2}{(1-p)} + \epsilon,
\end{equation}
and the mechanism is individually rational.
\end{theorem}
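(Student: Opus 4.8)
The plan is to combine the first-order optimality condition for $f^*$ with the explicit second-stage consumption formulas for the quadratic model, turning everything into a statement about a single scalar map, and then to verify individual rationality separately by exhibiting one particular report that is weakly profitable.

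First I would record the consequences of the quadratic assumption $\partial^2 u/\partial q^2=-1/d$: the marginal utility is affine in $q$, say $\mu(q,\theta)=\mu_0(\theta)-q/d$, so by \eqref{eq:opt-1} $q^a(\theta)=d(\mu_0(\theta)-\pi_0)$, by \eqref{eq:opt-con-p-dr-exp} $q^c(\theta)=q^a(\theta)-d\pi_2$, and \eqref{eq:opt-con-p-ndr} for $q^b$ reads $\mu_0(\theta)-q^b/d=\pi_0-\phi'(f^*-q^b)$. Substituting $\mu_0(\theta)=\pi_0+q^a(\theta)/d$ this rearranges to $q^b(f^*,\theta)-q^a(\theta)=d\,\phi'(f^*-q^b(f^*,\theta))$, hence, writing $x(\theta):=f^*-q^b(f^*,\theta)$ and $\delta f^*(\theta)=f^*-q^a(\theta)$, one gets $\delta f^*(\theta)=x(\theta)+d\,\phi'(x(\theta))$. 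For the deadband penalty \eqref{eq:penalty-db}, $\phi'(x)=\operatorname{sign}(x)\max\{|x|-\epsilon,0\}/\lambda$, and a short case check on the three regions $|x|\le\epsilon$, $x>\epsilon$, $x<-\epsilon$ (each corresponding to the same region for $\delta f^*$, since the map $x\mapsto x+d\phi'(x)$ is an increasing bijection fixing $[-\epsilon,\epsilon]$) shows $\phi'(x(\theta))=\operatorname{sign}(\delta f^*(\theta))\max\{|\delta f^*(\theta)|-\epsilon,0\}/(\lambda+d)=:g(\delta f^*(\theta))$.

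Next I would use the first-order condition. Since the deadband $\phi$ of \eqref{eq:penalty-db} is convex (piecewise quadratic with continuous derivative), $H$ is convex by Lemma \ref{lem:conv-cost}; differentiating \eqref{eq:exp-cost} by the envelope theorem (the explicit $f$-dependence of $J^c$ contributes $-\pi_2$, that of $J^b$ contributes $\phi'(f-q^b)$) gives $H'(f)=-p\pi_2+(1-p)\mathds{E}_{\theta}\phi'(f-q^b(f,\theta))$, which at the minimizer (equivalently, the condition $\pi_0=M(f^*)$ of Theorem \ref{lem:f-opt-cond}) yields $(1-p)\,\mathds{E}_{\theta}\,g(\delta f^*(\theta))=p\pi_2$. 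Now I invoke the elementary uniform lower bound $g(\delta)\ge (\delta-\epsilon)/(\lambda+d)$ valid for every $\delta\in\mathbb{R}$ (equality when $\delta>\epsilon$, and in the other two regions the left side is $0$ or $(\delta+\epsilon)/(\lambda+d)$, both at least $(\delta-\epsilon)/(\lambda+d)$). Taking expectations and using $\mathds{E}_{\theta}\delta f^*(\theta)=f^*-\mathds{E}_{\theta}q^a(\theta)$, we get $\frac{p\pi_2}{1-p}=\mathds{E}_{\theta}g(\delta f^*(\theta))\ge \frac{f^*-\mathds{E}_{\theta}q^a(\theta)-\epsilon}{\lambda+d}$, which rearranges to $f^*-\mathds{E}_{\theta}q^a(\theta)\le (d+\lambda)\frac{p\pi_2}{1-p}+\epsilon$, the claimed bound; note the slickness here is that the uniform bound makes it unnecessary to know which region each $\theta$ falls in.

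For individual rationality I would test the mean report $\bar f:=\mathds{E}_{\theta}q^a(\theta)$. The hypothesis $\max\{q_{\max}-\bar f,\,\bar f-q_{\min}\}\le\epsilon$ with $q^a(\theta)\in[q_{\min},q_{\max}]$ gives $|\bar f-q^a(\theta)|\le\epsilon$ for all $\theta$, so $\phi'(\bar f-q^a(\theta))=0$; by strict convexity of $J^b$ in $q$ this forces $q^b(\bar f,\theta)=q^a(\theta)$ with zero penalty, i.e. $J^b(\bar f,q^b,\theta)=J^a(q^a(\theta),\theta)$. When called, optimality of $q^c$ against $q^a(\theta)$ gives $J^c(\bar f,q^c,\theta)\le J^a(q^a(\theta),\theta)+\pi_2(q^a(\theta)-\bar f)$, whose $\theta$-expectation equals $\mathds{E}_{\theta}J^a(q^a(\theta),\theta)$ because $\mathds{E}_{\theta}q^a(\theta)=\bar f$. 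Plugging into \eqref{eq:exp-cost} gives $H(\bar f)\le \mathds{E}_{\theta}J^a(q^a(\theta),\theta)$, which is exactly the consumer's expected cost from not participating; since $f^*$ minimizes $H$, $H(f^*)\le H(\bar f)\le \mathds{E}_{\theta}J^a(q^a(\theta),\theta)$, so the program is individually rational. I expect the main obstacle to be the case analysis establishing $\phi'(x(\theta))=g(\delta f^*(\theta))$ for the kinked, non-strictly-convex deadband penalty together with the uniform inequality for $g$, and a secondary care point is justifying the envelope/first-order step when $\phi$ fails the strict convexity in \eqref{eq:phicon}, which is handled by the convexity and $C^1$ regularity of $H$.
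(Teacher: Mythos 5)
Your proof is correct, and it takes a genuinely different route from the paper's. The paper handles the kinked deadband penalty by constructing a $C^2$ smoothing family $\phi_{\Delta}$ so that the earlier optimality machinery applies verbatim, then splits into the cases $f^* \geq \mathds{E}_{\theta}q^a(\theta)+\epsilon$ and $f^* < \mathds{E}_{\theta}q^a(\theta)+\epsilon$, uses Lemma~\ref{lem:qa-qb-f} plus Jensen's inequality on the convex branch of $\phi'$, inverts $\phi_{\Delta}'$, substitutes $\mathds{E}_{\theta}q^b$ from \eqref{eq:condeqb}, and finally lets $\Delta\to 0^+$. You instead exploit the quadratic structure pointwise: the relation $q^b-q^a=d\,\phi'(f^*-q^b)$ gives $\delta f^*(\theta)=x(\theta)+d\,\phi'(x(\theta))$, you solve this increasing piecewise-linear map exactly on the three regions to write $\phi'(x(\theta))=g(\delta f^*(\theta))$ with $g$ explicit, and then a single uniform inequality $g(\delta)\geq(\delta-\epsilon)/(\lambda+d)$ combined with the first-order condition \eqref{eq:foc-f-1} delivers the bound in one step. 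This buys you a shorter argument with no smoothing, no Jensen, and no case split on the location of $f^*$ (the case analysis is absorbed into the uniform bound on $g$); the price is that it leans on the explicit quadratic/deadband formulas, whereas the paper's smoothing-plus-Jensen route is structured to reuse the general Theorem~\ref{th:f-opt-ub} machinery. Your care points are handled adequately: the deadband penalty is $C^1$ and convex, $J^b$ is strictly convex in $q$, so the envelope computation of $H'$ and the first-order characterization of the minimizer survive the loss of strict convexity of $\phi$ (the paper addresses the same issue via $\phi_{\Delta}$). Your individual-rationality argument, testing the report $\bar f=\mathds{E}_{\theta}q^a(\theta)$ and showing $H(f^*)\leq H(\bar f)\leq \mathds{E}_{\theta}J^a(q^a(\theta),\theta)$, is the same idea as the paper's but spelled out in more detail, including the observation that $q^b(\bar f,\theta)=q^a(\theta)$ with zero penalty and that the expected reward term vanishes by the choice of $\bar f$.
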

\begin{proof}
Refer Appendix.
\end{proof}

\section{SO's cost}
\label{sec:socost}

The SO's overall cost includes four terms: the cost to purchase power
from the wholesale market, the payment for DR services, the retail energy payments and the recruitment cost. We ignore the recruitment cost for the initial analysis here. This allows us to mathematically derive an order approximate expression, with respect to $p$, for the resultant cost. Using this we show that $p$ can be used as a lever to control SO's cost as well. This allows the SO to achieve an almost optimal cost in this case by choosing a very small value for $p$.

We then discuss the case where the recruitment cost is non-trivial. Here, we show that $p$ is restricted as a lever for controlling SO's cost. This is because the recruitment costs becomes unbounded as $p \rightarrow 0$. However, we show, for a typical DR scenario, that the SO's cost is decreasing with $p$ up to a certain threshold value. This threshold value is small enough that the cost can be significantly reduced by choosing this threshold as the selection probability. This suggests that the SO can still reduce its cost to significantly lower levels for typical DR scenarios.

\subsection{Without Recruitment Cost} Let $Q$ denote the net energy purchased from wholesale market, $\Pi(Q)$ the wholesale market's price, $\Delta\tilde{Q}$ the measured net load reduction provided by the called DR resources, and $\pi_0$ the retail energy price. Then, the SO's overall cost, ignoring the recruitment cost, is given by
\begin{equation}
J_{SO} = \Pi(Q)Q + \Pi(Q) \Delta \tilde{Q} - \pi_0 Q.
\label{eq:supplycost}
\end{equation}

Let $Q_0$ denote the overall load had the DR resources not been called to
reduce load and $\Delta Q$ the true net reduction provided by the called DR
resources, then $Q_0 = Q + \Delta Q$ and the SO's cost can also be written as
follows,
\begin{align}
J_{SO} & = (\Pi(Q_0-\Delta Q)-\pi_0) (Q_0-\Delta Q) + \nonumber\\
&  \qquad \Pi(Q_0 - \Delta Q) (\Delta Q + \Delta\tilde{Q} - \Delta Q),
\label{eq:supplycost-1}
\end{align}
where $\Delta \tilde{Q} - \Delta Q$ corresponds to the inflation in net load
reduction estimate, which arises from the inflation in baseline estimates of
the recruited DR providers.

From Theorem~\ref{th:f-opt-ub}, it follows that the inflation $\Delta \tilde{Q} - \Delta Q$ is $O(p)$ where $p$ is the probability of calling a consumer, which is a design variable of the DR mechanism. Hence, in this case, $\min_{\Delta Q, p = 0} J_{SO} = J^*_{SO}$, i.e., the SO's optimal cost can be achieved by driving the probability to zero. And so the optimal reduction $\Delta Q^*= \arg \min_{\Delta Q, p = 0} J_{SO}$. From the convexity of $J_{SO}$, when $p$ is zero, it follows that $\Delta Q^*$ satisfies the first order condition,
\begin{equation}
\Delta Q^* =  Q_0 - \Pi'^{-1}({\pi_0}/Q_0).
\label{eq:DQ-Q}
\end{equation}
The market price corresponding to $Q_0 - \Delta Q^*$ is exactly the TMC price $\pi^*$ because $Q_0 - \Delta Q^*$ is the optimal reduction. Consequently, $\Delta Q^*$ satisfies
\begin{equation}
\pi^* = \Pi(Q_0 - \Delta Q^*).
\label{eq:DQ-pi}
\end{equation}

The SO recruits $n=1/p$ sets of consumers such that each set can provide $\Delta Q^*$ of load reduction when called for a DR event. Hence, the cost for the SO \eqref{eq:supplycost-1} when a particular set is called during a DR event is given by
\begin{align}
J_{SO} & = (\Pi(Q_0-\Delta Q^*)-\pi_0)(Q_0-\Delta Q^*) + \nonumber\\
& \qquad  \Pi(Q_0 - \Delta Q^*)
         (\Delta Q^* + \Delta\tilde{Q} - \Delta Q^*).
\end{align}

Using definition of $J_{SO}^*$,
\begin{equation}
J_{SO}  = J_{SO}^* + \Pi(Q_0-\Delta Q^*)
                      (\Delta\tilde{Q} - \Delta Q^*).
\end{equation}

Substituting for baseline inflation from Theorem \ref{th:f-opt-ub},
\begin{equation}
 J_{SO} = J_{SO}^* + \Pi(Q_0-\Delta Q^*)
                      \left(\bar{N}\bar{d}\frac{p\pi_2}{1-p}\right),
\label{eq:supplycost-3}
\end{equation}
where $\bar{N}$ is the number of consumers in the set
and $\bar{d}$ is the average rate of diminishment of the marginal
utility across the recruited consumers in the set, which is an unknown.
The SO chooses the reward rate as $\pi_2 = \pi^*$, and therefore
\begin{equation}
J_{SO} = J_{SO}^* + (\pi^*)^2
        \left(\bar{N}\bar{d}\frac{p}{1-p}\right) = J_{SO}^* + O(p).
\label{eq:supplycost-4}
\end{equation}

Note that with the inclusion of deadband to ensure individual rationality,
the SO's cost becomes,
\begin{equation}
J_{SO} = J_{SO}^* + (\pi^*)^2\left(\bar{N}\bar{d}\frac{p}{1-p}\right) + \pi^*\bar{N}\epsilon = J_{SO}^* + O(p+\epsilon).
\label{eq:supplycost-5}
\end{equation}
Thus, for this case, the SO's cost is $O(p)$ and $O(\epsilon)$ optimal and the SO's cost $J_{SO}$ approaches $J_{SO}^*$ when both $p$ and $\epsilon$ approach zero. This result suggests that the SO can achieve an almost optimal cost in this case by choosing a very small value for $p$.

\subsection{With Recruitment Cost}
Denote the recruitment cost per customer by $\pi_{\text{rec}}$. Let $N_T$ be the total number of consumers recruited. Then $N_T$ is given by
\begin{equation*}
N_T = \sum_{i=1}^n\bar{N}_i,
\end{equation*}
where  $n = 1/p$ is the number of groups and $\bar{N}_i$ is the number of consumers in group $i$. Including the total recruitment cost, which scales with $N_T$, SO's overall cost is given by
\begin{equation}
J_{SO} = \Pi(Q)\cdot Q + \Pi(Q)\cdot \Delta \tilde{Q} - \pi_0\cdot Q + \pi_{\text{rec}} \cdot N_T.
\label{eq:supplycost-6}
\end{equation}
 Note that, in this case, the optimal cost for SO $J^*_{SO} \neq \min_{\Delta Q, p = 0} J_{SO}$. The reason is that the last term grows unboundedly as $p \rightarrow 0$. This also suggests that, in this case, an almost optimal cost cannot be achieved by choosing $p$ to be very small. This is illustrated in the example below.

We provide a simple example here to illustrate how the SO's cost varies with $p$ when the recruited consumers provide $\Delta Q^*$ reduction and when the recruitment cost is non-trivial. In the example we consider here, $c = 5\times 10^2 \ \$/\text{MWh}$, $\pi_0 = \$ 120/\text{MWh}$, $Q_0 = 8000$~MWh. We consider two different values for $d$, i.e., $ d = 0.1, d = 0.01$. The values of $d$ are derived from demand reduction provided by typical customers assuming the payment to be $\$ 100/\ \text{MWh}$. The two $d$ values correspond to large industrial customers and commercial places like retail stores etc. respectively \cite{kiliccote2008installation}. We assume that the supply ranges from $5000$~MWh to $8000$~MWh. Using the representative supply curve from \cite{hausman2006lmp} we approximate the inverse supply curve for this range by $\Pi(Q) = aQ + bQ^2$ where $a = -0.0415$ in $\$/\text{MWh}$ and $b = 8.3 \times 10^{-6}$ in $\$/\text{MWh}^2$. For this supply curve and $Q_0 = 8000$~MWh, $\pi^* \sim \$ 100/\text{MWh}$ and $\Delta Q^* \sim 1200$~{MWh}. The reward payment of $\$ 100/\text{MWh}$ and the aggregate load reduction of $1200$~{MWh} are typical of DR programs spanning the region covered by a SO \cite{interconnection2017demand}.

The first two plots of Figure~\ref{fig:socost} provides the variation of SO's cost with respect to $p$ when the recruitment cost is $\pi_{\text{rec}} = \$ 2/\text{Customer}$ for different values of $d$ and the bottom row plot of Figure~\ref{fig:socost} provides the variation of SO's cost when the recruitment cost $\pi_{\text{rec}} = \$10/\text{Customer}$. The former recruitment cost, i.e. $\pi_{rec} = 2$, is based on typical service costs charged per customer on a monthly basis to recover the metering implementation and maintenance cost \cite{doris2011government}. In our case, we consider the worst-case scenario where the SO bears this cost instead of passing it on to the DR participants. Note that the approximation of the SO's cost by ignoring recruitment cost, as in the previous section, is a reasonable approximation of the SO's cost up to a certain threshold probability. This threshold probability is as low as $0.1$ and $0.2$ for the cases $d = 0.1$ and $d = 0.01$ respectively. These $d$ values and other parameter values are typical values as stated before. Hence we expect that, in a typical scenario such as this, a SO can still reduce its cost significantly by setting the calling probability equal to this threshold value.

\begin{figure}
\begin{tabular}{ll}
\includegraphics[width=.49\columnwidth]{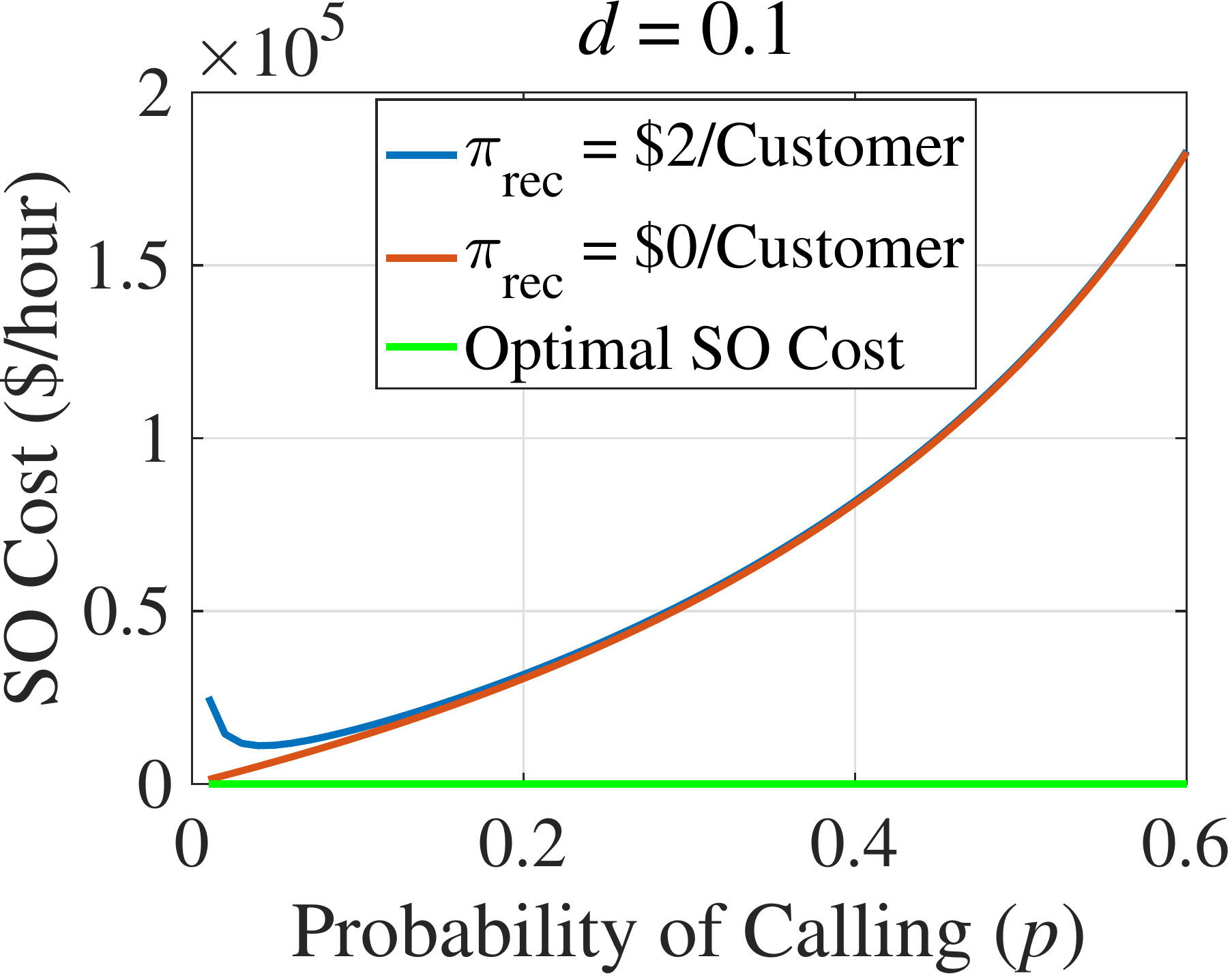} & \includegraphics[width=.49\columnwidth]{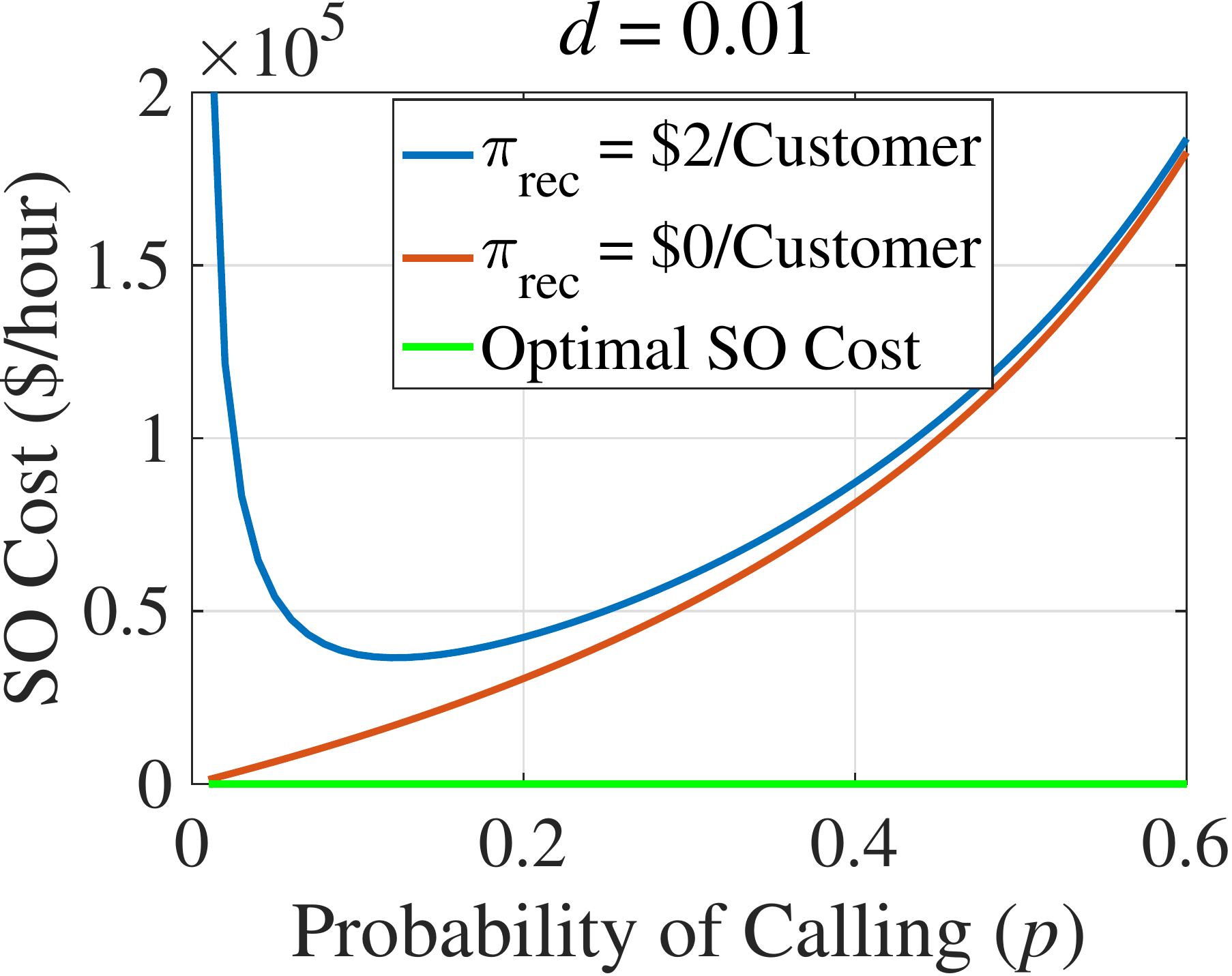} \\
 \includegraphics[width=.49\columnwidth]{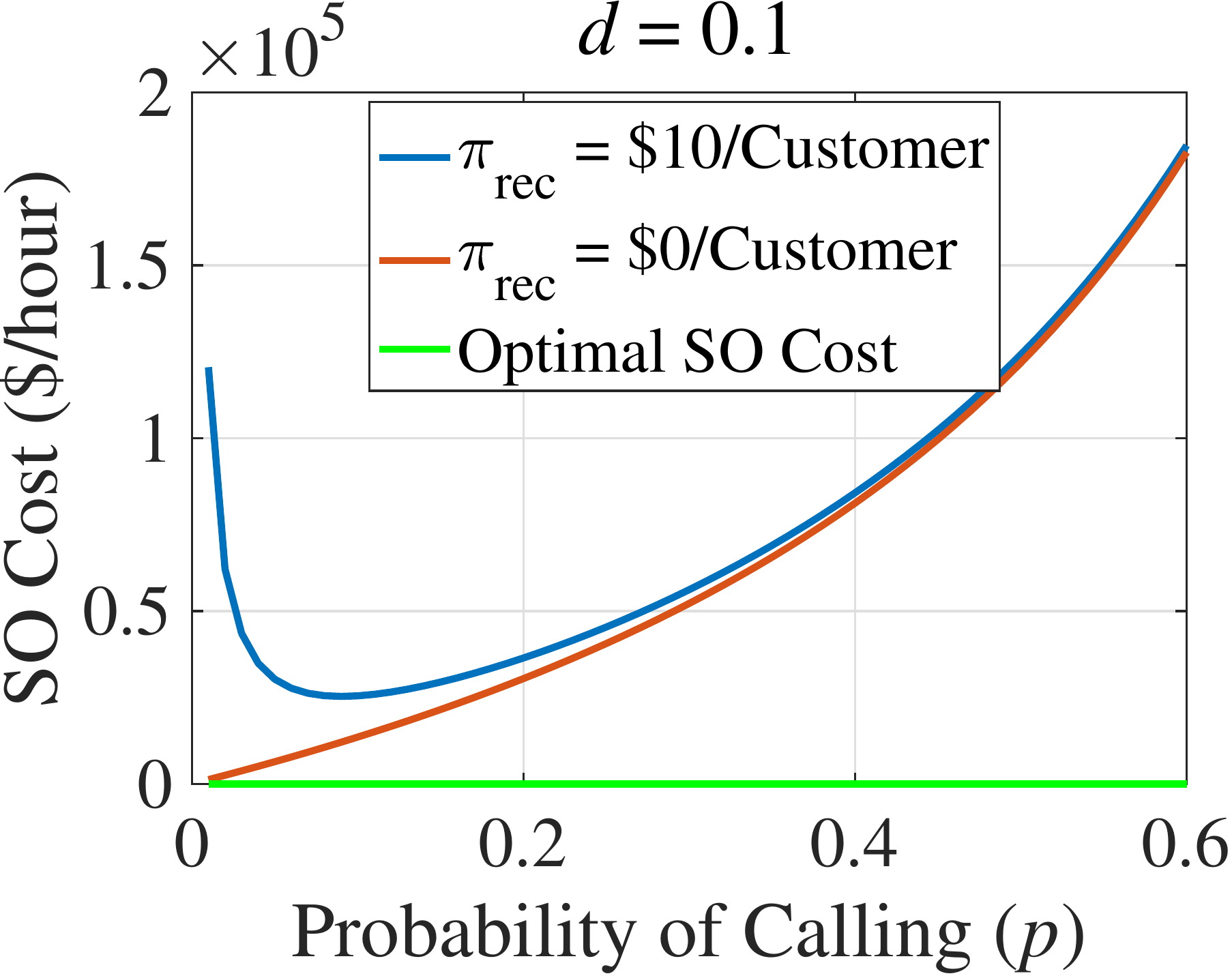} &  \includegraphics[width=.49\columnwidth]{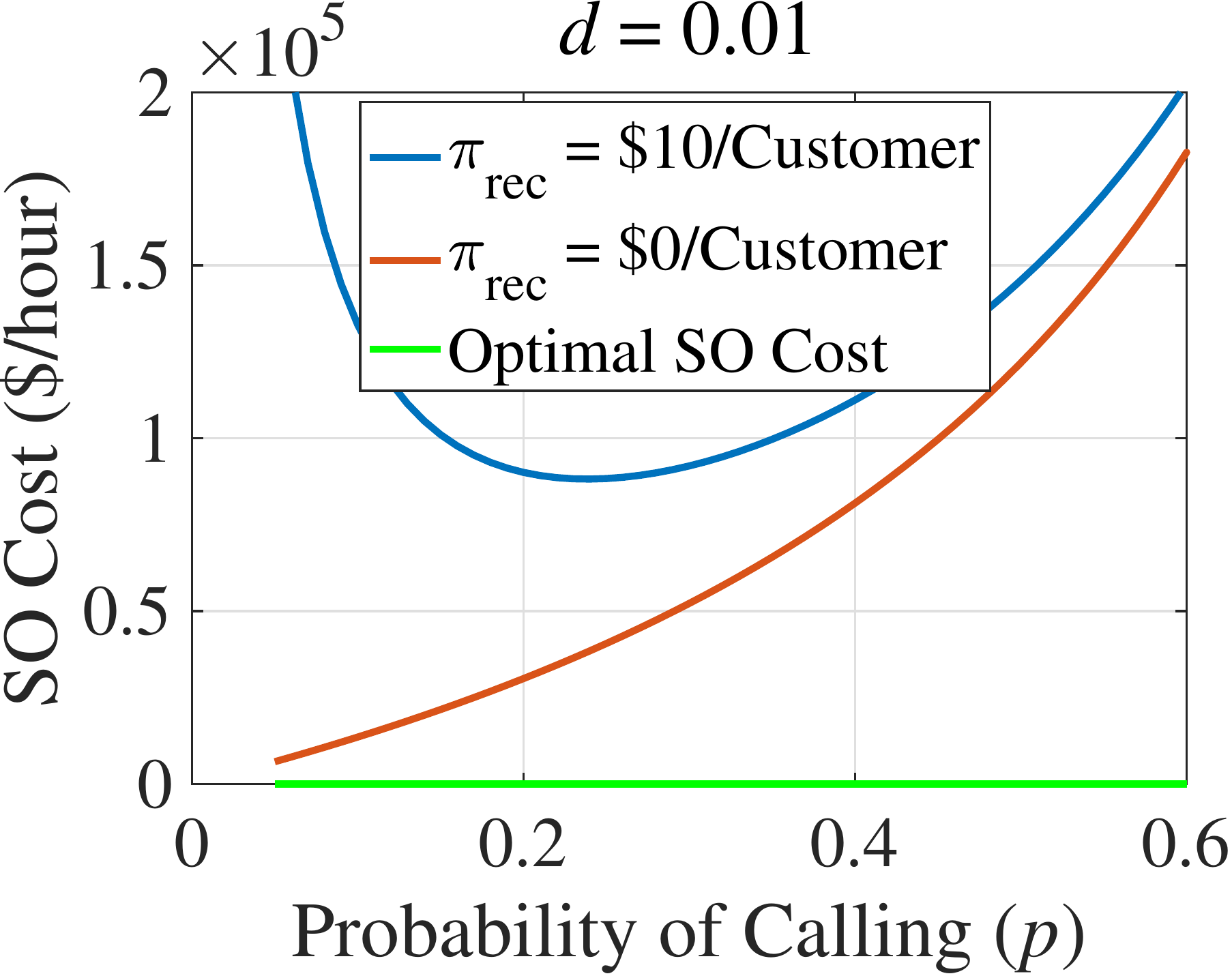}
\end{tabular}
\caption{SO cost vs $p$ when load reduction is $\Delta Q^*$ during a DR event. Top: $\pi_{\text{rec}} = \$ 2/\text{Customer}$, bottom:$\pi_{\text{rec}} = \$ 10/\text{Customer}$.}
\label{fig:socost}
\end{figure}

\section{Self-Reported Baseline vs. Other Baseline Estimation Methods}
\label{sec:CAISO-comp}

Here, we shall use the CAISO $m/m$ method~\cite{CaisoDR2017} for our comparative study. We emphasize here that a similar analysis applies to other estimation methods like control group methods. In CAISO's m/m method, the SO computes the average consumption of the most recent $m$ similar but non-event days and uses this average-based estimate as the baseline. Hence, the baseline estimate is a moving average of the consumption profile of the consumers. Typically this average-based estimate from past consumption data is corrected by an adjustment factor to account for any variation in the consumption pattern from the past. This adjustment factor is common to all baseline estimation methods and is highly recommended. As we shall see this factor is the primary cause for the existence of adverse incentives to inflate baseline. Hence, the analysis to follow equally applies to all current baseline methods that use an adjustment factor, which includes the control group methods.

As discussed before, the individual optimal consumption decision depends on whether the consumer is signaled or not for reduction on a particular day.
Also the baseline estimate used for the payments depends on the consumption in the days prior to the DR event. So the payments made during future DR events can influence the consumer to inflate their consumption during a non-event day. On a particular day, the consumer's benefit depends on whether the consumer is signaled or not. As explained in Section~\ref{sec:secstage-decision}, if the consumer is participating in the DR program and is signaled to reduce, its total cost is
$J^c(q,f,\theta)$ given by (\ref{eq:Jc}) where $f$ is the baseline estimate obtained by the CAISO $m/m$ method.  However, if the consumer is not signaled, its cost is $J^a(q,\theta)$ (see equation \ref{eq:Ja}). Unlike the incentive-based DR program with self-reported baseline, the CAISO
program does not impose a penalty, and the consumer's cost is the same as if it were not participating in the DR program, when it is not called.

Let $\mathcal T_N$ denote the set of most recent $m$ similar but non-event days, and let $f^c$ denote the baseline calculated in CAISO's $m/m$ model.  Then,
\begin{equation}
f^c = \frac{1}{m} \sum_{\tau\in\mathcal T_N}  q_{\tau},
\label{eq:caiso-baseline}
\end{equation}
where $\{q_{\tau}:\tau\in\mathcal T_N\}$ is the set of consumption
for the near past $m$ similar non-event days. The baseline estimation $f^c$ is multiplied by an adjustment factor $C_f$ to account for any variation in
the consumption pattern. Hence, the CAISO baseline with adjustment factor is given by
\[
\bar{f}^c = f^c C_f,
\]
where $f^c$ was defined in \eqref{eq:caiso-baseline}.
Let $q$ denote the consumption on the current DR event day, and $q^-$ the consumption on the day
before. Let $\mathcal T_E$ be the set of days before the days in the set $\mathcal T_N$,
and define $f^- = \frac{1}{m}\sum_{\tau\in\mathcal T_E} q^-_{\tau}$ as
the average consumption of the days in the set $\mathcal T_E$. The correction factor for the current DR day is then computed as
\begin{equation}
C_f = \frac{q^-}{f^-}.
\end{equation}

Typically, the consumers are signaled a day ahead of the DR event. So, the reward during the DR event on the current day can influence the consumer to inflate its day-ahead consumption $q^{-}$. The day-ahead consumption is obtained by minimizing the joint cost of the current DR day and the day before with respect to $q^{-}$, as these are the only two terms in the overall cost of the consumer that $q^{-}$ can influence. The joint cost for the two days is given by $J^a(q^-,\theta) + J^c(q,\theta)$, where
\begin{align*}
J^a(q^-,\theta) &= \pi_0 q^- - u(q^-,\theta),\\
J^c(q,\bar f^c,\theta) &= \pi_0 q - u(q,\theta) - \pi_2 (\bar f^c - q).
\end{align*}

Here, for illustration purposes, we have assumed identical utility functions and retail price for both the days. The analysis can be trivially extended to the general case where they are not identical. The term $\pi_2 (\bar f^c - q)$ is the payment that the consumer receives for
reducing consumption and the value of $\theta$ is realized when the consumption
decision is made.

By definition it follows that the optimal consumption $q^{-*}$ on the day before the current DR event day is given by
\begin{equation}
q^{-*}(\theta,\bar f^c) = \arg\min_{q^-} (J^a(q^-) + J^c(q,\bar f^c)).
\end{equation}
From the first order optimality condition it follows that $q^{-*}$ should satisfy
\begin{equation}
\pi_0 - \mu(q^{-},\theta) - \pi_2 \frac{\partial \bar{f}^c(q^-)}{\partial q^-} = 0.
\end{equation}
On the DR event day, $f^c$ and $f^-$ are constants. This implies,
\begin{equation}
\frac{\partial \bar{f}^c(q^-)}{\partial q^-} = \frac{f^c}{f^-}.
\end{equation}
Hence, the optimal consumption on the day before the current DR event day is
\begin{equation}
q^{-*}(\theta) = \mu^{-1} \left(
\pi_0 - \pi_2 \frac{f^c}{f^-},\theta
\right).
\end{equation}

Using this result, we provide a lower bound for the expected value of baseline
inflation in the CAISO m/m method with adjustment factor, when the utility
function is quadratic, in the following lemma.

\begin{lemma} \label{lem:caiso_adj}
Let the consumer's utility $u$ be a quadratic function such that
$\forall (q,\theta): \frac{\partial u^2(q,\theta)}{\partial q^2}
= -1/d$ where $d$ is a positive scalar, then the
baseline report $\bar f^c$ satisfies
\begin{equation}
\mathds E_{\theta} (\bar f^c - q_a(\theta) ) > d\pi_2.
\end{equation}
\end{lemma}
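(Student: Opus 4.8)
The plan is to track how baseline inflation arises through the adjustment factor $C_f = q^-/f^-$. The key observation is that on the DR event day the quantities $f^c$ and $f^-$ are fixed constants (they depend only on past, already-realized consumption), so the only channel through which a rational consumer can inflate the payment $\pi_2(\bar f^c - q)$ is by inflating the day-ahead consumption $q^-$. We already have the closed form for the optimal day-ahead consumption, $q^{-*}(\theta) = \mu^{-1}\!\left(\pi_0 - \pi_2 f^c/f^-,\theta\right)$, so the first step is simply to compare this with the ``honest'' consumption $q^a(\theta) = \mu^{-1}(\pi_0,\theta)$ from \eqref{eq:opt-1}. Under the quadratic utility assumption, $\mu(q,\theta)$ is affine in $q$ with slope $-1/d$, so $\mu^{-1}(\cdot,\theta)$ is affine with slope $-d$, and hence
\[
q^{-*}(\theta) - q^a(\theta) = d\,\pi_2\,\frac{f^c}{f^-}.
\]

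The second step is to relate $f^c$, $f^-$, and $\mathds{E}_\theta q^a(\theta)$. On a steady-state (stationary) sequence of non-event days, the consumption in the days of $\mathcal T_E$ is just honest consumption $q^a$, so $f^- = \frac1m\sum_{\tau\in\mathcal T_E} q^a_\tau$, whose expectation is $\mathds{E}_\theta q^a(\theta)$. The days in $\mathcal T_N$, however, are each the ``day before'' some future DR event, so the consumption recorded there is the inflated $q^{-*}$; thus $f^c = \frac1m\sum_{\tau\in\mathcal T_N} q^{-*}_\tau$ with expectation $\mathds{E}_\theta q^{-*}(\theta) = \mathds{E}_\theta q^a(\theta) + d\pi_2 f^c/f^-$. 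Taking expectations and using $\mathds{E}_\theta q^{-*}(\theta) > \mathds{E}_\theta q^a(\theta)$ (inflation is strictly positive since $\pi_2, d, f^c/f^- > 0$) gives $\mathds{E}_\theta f^c > \mathds{E}_\theta f^-$, hence the adjustment ratio $f^c/f^-$ is, in expectation, strictly greater than $1$.

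The final step assembles the bound. The reported baseline actually used for payment is $\bar f^c = f^c C_f = f^c \cdot q^-/f^-$, and on the DR event day $q^- = q^{-*}(\theta)$, so
\[
\mathds{E}_\theta(\bar f^c - q^a(\theta))
= \mathds{E}_\theta\!\left(\frac{f^c}{f^-}\,q^{-*}(\theta)\right) - \mathds{E}_\theta q^a(\theta)
= \frac{f^c}{f^-}\,\mathds{E}_\theta q^{-*}(\theta) - \mathds{E}_\theta q^a(\theta).
\]
Substituting $q^{-*}(\theta) = q^a(\theta) + d\pi_2 f^c/f^-$ and using $f^c/f^- > 1$ together with $\mathds{E}_\theta q^{-*}(\theta) > \mathds{E}_\theta q^a(\theta)$ yields $\mathds{E}_\theta(\bar f^c - q^a(\theta)) > d\pi_2 \cdot (f^c/f^-) > d\pi_2$, which is the claim.

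The main obstacle I anticipate is making the steady-state bookkeeping rigorous: one must be careful about which days in the averaging windows carry honest consumption versus inflated day-ahead consumption, and about whether $f^c/f^-$ should be treated as a deterministic equilibrium ratio or as a random quantity whose expectation one bounds. A clean way around this is to argue at the level of a stationary equilibrium — assume the consumption statistics have settled so that $f^c$ and $f^-$ equal their expected values — which is implicitly the regime the lemma is stated in, and then the inequality $f^c/f^- > 1$ follows from the strict positivity of the per-day inflation $d\pi_2 f^c/f^- > 0$. One should also note that $\phi$ plays no role here since the CAISO mechanism imposes no penalty, so unlike Theorem~\ref{th:f-opt-ub} there is no $\lambda$ term and, crucially, no $p$-dependence that would let the inflation be driven to zero.
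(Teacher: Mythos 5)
Your overall skeleton matches the paper's: everything hinges on showing the adjustment ratio satisfies $f^c/f^- > 1$, after which the affine inverse marginal utility gives $q^{-*}(\theta)-q^a(\theta) = d\pi_2 f^c/f^- > d\pi_2$ and $\bar f^c = (f^c/f^-)\,q^{-*} > q^{-*}$, so the bound follows; your final algebra is correct (and usefully makes explicit the steps the paper leaves implicit).

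The weak point is your justification of the pivotal inequality $f^c > f^-$. You argue it by asserting that the days in $\mathcal T_N$ ``are each the day before some future DR event,'' so that their recorded consumption is the inflated $q^{-*}$, while the days in $\mathcal T_E$ carry honest consumption $q^a$. Neither claim is part of the model: $\mathcal T_N$ is simply the set of the $m$ most recent similar non-event days used in \eqref{eq:caiso-baseline}; nothing ties them to being event-eve days, so identifying their consumption with $q^{-*}$ (the quantity derived from the adjustment-factor channel) imports an assumption about the event schedule that the lemma does not make. The paper instead derives $f^c > f^-$ from the structure of the estimator itself: consumption on $\mathcal T_N$ days enters only the \emph{numerator} of future baselines (through $f^c$), so the consumer's incentive is to raise it above $q^a$, whereas consumption on $\mathcal T_E$ days enters only the \emph{denominator} (through $f^-$), so the incentive is to lower it below $q^a$ --- in particular your premise that $\mathcal T_E$ days are honest also does not hold, although that error is in the harmless direction. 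As written, your derivation of $f^c/f^- > 1$ does not go through from the model alone; it is repaired by replacing your stationary-equilibrium bookkeeping with the numerator/denominator incentive asymmetry, after which the rest of your argument stands unchanged. (Your closing observations --- no penalty, hence no $\lambda$ or $p$ lever in the CAISO scheme --- are correct and consistent with the paper's discussion.)
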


\begin{proof}
Refer Appendix.
\end{proof}

In the proposed DR mechanism, the expected baseline inflation with quadratic utility and penalty function was obtained in Theorem~\ref{th:f-opt-ub-db}. From the discussion in the previous two sections, it follows that $\lambda$ and $p$ can be used as levers to control baseline. Hence by choosing $\lambda$ and $p$ to be sufficiently small and provided $\epsilon$ is not comparable to $d\pi_2$, which is the case when $\pi_2 \sim O(\pi_0)$, we get that
\begin{equation}
\mathds E_{\theta}
( f^* - q^a(\theta) ) \ll d \pi_2 < 
     \mathds E_{\theta} (\bar f^c - q_a(\theta) ).
\end{equation}

Thus, in the self-reported approach, we can ensure that the inflation in baseline per consumer is
significantly smaller compared to conventional baseline estimation methods, such as CAISO's m/m method, that uses an adjustment factor.

\section{Conclusion} \label{sec:conclusion}
We proposed a mechanism for incentive-based DR programs where the only
information that is elicited from each consumer is a self-report of its
baseline consumption. The mechanism entails a calling probability for each
consumer and a penalty when the consumer is not called. The mechanism provides the required service reliably by selecting a certain set of consumers during every DR event. We showed that the probability of calling and the penalty can be used to control the baseline inflation. We also justified that the mechanism's cost can be significantly reduced by deploying DR resources. Finally, we showed that the self-reported baseline estimates a better baseline estimate than conventional methods such as the CAISO's m/m method.

\bibliographystyle{IEEEtran}
\bibliography{Refs-BL-TSG}

\appendix

\subsection{Proof of Lemma~\ref{lem:qa-qb-f}}
In order to prove the second statement, we note that
$J^b(f,q,\theta)$ is the sum  of two convex functions $U_1(q) = \pi_0q -
u(q,\theta)$ and $U_2(f, q) = \phi(f-q)$. The minimizer of $U_1$ is
$q^a(\theta)$ and the minimizer of $U_2$ is $q = f$.
Then the minimizer of $J^b = U_1 + U_2$ necessarily lies between the
minimizers of $U_1$ and $U_2$ which implies that $q^b(f,\theta)$ lies between $q^a(\theta)$ and $f$. The first statement
follows from \eqref{eq:opt-con-p-dr-exp}, \eqref{eq:opt-1} and the
properties of the consumer's utility that is monotone increasing.

\subsection{Proof of Lemma \ref{lem:conv-cost}}
We start by showing that $0 \leq \alpha(f,\theta) < 1$,
where $\alpha$ is the cost sensitivity defined as
$\alpha(f,\theta) = \frac{dq^b(f,\theta)}{d f}$.
The optimal consumption $q^b(f, \theta)$ satisfies \eqref{eq:opt-con-p-ndr}.
Holding $\theta$ fixed and differentiating \eqref{eq:opt-con-p-ndr} further we
get,
\begin{align*}
\left(\phi''(f-q) -\frac{\partial^2 u(q,\theta)}{\partial q^2} \right)
\frac{d q^b(f,\theta)}{df} - \phi''(f-q) = 0.
\end{align*}
Convexity of $\phi$ and strict convexity of $-u$ implies the existence of $\frac{d q^b}{df}$ and is given by
\begin{align*}
\alpha(f,\theta) =
\left(\phi''(f-q^b) -\frac{\partial^2 u(q^b,\theta)}{\partial q^2} \right)^{-1} \phi''(f-q^b),
\end{align*}
and satisfies $0 \leq \alpha < 1$. Next, we differentiate $H(f)$ twice to show that $H''(f) > 0$.
Differentiating $H(f)$ we get
\begin{align*}
H'(f) & = (1 - p)\mathds{E}_{\theta} \frac{d J^b(f, q^b, \theta)}{d f} + p \mathds{E}_{\theta}  \frac{d J^c(f, q^c, \theta)}{d f} \nonumber \\
& = (1 - p)\mathds{E}_{\theta} \phi'(f-q^b) -p\pi_2.
\end{align*}
Differentiating once again, we get
\begin{align*}
H''(f) = (1 - p)\mathds{E}_{\theta} \left(1 - \alpha(f,\theta)\right)\phi''(f-q^b).
\end{align*}
Before we showed that $\left(1 - \alpha(f,\theta)\right) > 0$. Then, it follows that $H(f)$ is (strictly) convex if and only if  $\phi$ is (strictly) convex.

\subsection{Proof of Theorem~\ref{lem:f-opt-cond}}
The optimal forecast $f^*$ satisfies the first order condition:
\begin{align*}
H'(f) = (1 - p) \mathds{E}_{\theta}  \frac{d J^b(f, q^b, \theta)}{d f} + p \mathds{E}_{\theta}  \frac{d J^c(f, q^c, \theta)}{d f} = 0.
\end{align*}
The sensitivity of \emph{optimal} cost $J^b(f,q^b,\theta)$ with respect to $f$
is given by
\begin{align*}
\frac{d J^b(f,q^b,\theta)}{d f}
&= \pi_0 \alpha(f,\theta) -\frac{\partial u(q^b, \theta)}{\partial q} \alpha(f,\theta) \nonumber \\
& \quad - \phi'(f-q^b) (\alpha(f,\theta) - 1),
\end{align*}
where $\alpha(f,\theta) = \frac{dq^b(f,\theta)}{d f}$.
Then, taking into account that $q^b(f,\theta)$ satisfies
\eqref{eq:opt-con-p-ndr}, we get
\begin{align}
\frac{d J^b(f, q^b, \theta)}{d f} = \phi'(f-q^b).
\label{eq:der-f-Jb}
\end{align}
The sensitivity of \emph{optimal} cost $J^c(f,q^c,\theta)$ with respect to $f$
is given by
\begin{align*}
\frac{d J^c(f,q^c, \theta)}{d f} = \pi_0 \beta(\theta) - \frac{\partial u(q^c, \theta)}{\partial q} \beta(\theta) - \pi_2 (1 - \beta(\theta)),
\end{align*}
where $\beta(\theta) = \frac{dq^c(\theta)}{d f}$.
As before, $q^c(\theta)$ satisfies \eqref{eq:opt-con-p-dr} and we get
\begin{align}
\frac{d J^c(f,q^c, \theta)}{d f} = -\pi_2 = \pi_0  - \frac{\partial u(q^c, \theta)}{\partial q}.
\label{eq:der-f-Jc}
\end{align}
From equations \eqref{eq:der-f-Jb} and \eqref{eq:der-f-Jc}, we obtain
\begin{align}
\mathds{E}_{\theta} \phi'(f^* - q^b(f^*,\theta)) = \frac{p \pi_2}{1-p}.
\label{eq:foc-f-1}
\end{align}
The optimality condition follows from equations
\eqref{eq:opt-con-p-ndr} and \eqref{eq:opt-con-p-dr} because
\begin{align}
\pi_0 = (1-p)\mathds{E}_{\theta} \frac{\partial u(q^b, \theta)}{\partial q} + p\mathds{E}_{\theta} \frac{\partial u(q^c, \theta)}{\partial q}.
\label{eq:foc-f}
\end{align}
The right hand side in \eqref{eq:foc-f} is the expected marginal utility which
implies that $\pi_0 = M(f^*)$. Since $\phi$ was selected to be convex,
from Lemma~\ref{lem:conv-cost}, $f^*$ is a global minimizer of $H(f)$.
Moreover, if $\phi$ is a strictly convex function, again
from Lemma~\ref{lem:conv-cost}, $f^*$ is unique.

\subsection{Proof of Theorem~\ref{lem:f-opt-novar}}
From equation \eqref{eq:foc-f-1}, we have
\begin{align*}
\mathds{E}_{\theta} \phi'(f^* - q^b(f^*,\theta)) =
	\mathds{E}_{\theta} 1/\lambda(f^* - q^b(f^*,\theta)) = \frac{p \pi_2}{1-p}.
\end{align*}
This implies,
\begin{align*}
\lim_{p \rightarrow 0} f^* - \mathds{E}_{\theta} q^b = \lim_{p \rightarrow 0}  \mathds{E}_{\theta} \delta \tilde{f}^*(\theta) = 0.
\end{align*}
From the optimality condition for $q^b(f^*,\theta)$ \eqref{eq:opt-con-p-ndr} and when $\frac{\partial u^2(q,\theta)}{\partial q^2} = -1/d$,
\begin{align*}
q^b(f^*,\theta) = q^a(\theta) + d/\lambda(f^* - q^b(f^*,\theta)).
\end{align*}
Taking expectations on both sides we get
\begin{align*}
\lim_{p \rightarrow 0} \mathds{E} q^b(f^*,\theta) = \mathds{E} q^a(\theta).
\end{align*}
That is,
\begin{align*}
\lim_{p \rightarrow 0} f^* - \mathds{E}_{\theta} q^a = \lim_{p \rightarrow 0}  \mathds{E}_{\theta} \delta f^*(\theta) = 0.
\end{align*}

\subsection{Proof of Theorem \ref{th:f-opt-ub}}
The consumptions $q^a(\theta)$ and $q^c(\theta)$ have the expressions:
\begin{align*}
q^a(\theta) &= \mu^{-1}(\pi_0,\theta), \\
q^c(\theta) &= \mu^{-1}(\pi_0 + \pi_2,\theta).
\end{align*}
Also from \eqref{eq:foc-f}, we get
\begin{align*}
\pi_0 & =
(1-p)\mathbb E_{\theta} \frac{\partial u(q^b,\theta)}{\partial q} +
p\mathbb E_{\theta} \frac{\partial u(q^c,\theta)}{\partial q} \nonumber \\
& = (1-p) \mathbb E_{\theta} \frac{\partial u(q^b,\theta)}{\partial q} +
    p(\pi_0 + \pi_2).
\end{align*}
This implies,
\begin{align*}
\mathbb E_{\theta} \mu(q^b(\theta),\theta) =
\pi_0 - \frac{p}{1-p}\pi_2.
\end{align*}

Since the utility function $u$ is quadratic in $q$, the marginal utility
$\mu = u'$ is affine in $q$. Moreover, since $\mu'$ is independent of
the random variable $\theta$, it holds
\begin{align*}
\mathbb E_{\theta}
\mu^{-1}(\mathbb E_{\theta} \mu (q(\theta), \theta),\theta) =
\mathbb E_{\theta} q(\theta),
\end{align*}
and substituting $q^b(\theta)$ in the previous expression, we obtain
\begin{align*}
\mathbb E_{\theta} q^b(\theta) =
\mathbb E_{\theta}
\mu^{-1}\left( \pi_0 - \frac{p}{1-p}\pi_2, \theta  \right).
\end{align*}

The consumer's utility $u$ and the penalty
$\phi$ are quadratic functions, then their derivatives $u'$ and
$\phi'$ are affine and their inverse functions are also affine.
Moreover, the derivative of the inverse functions satisfy:
\begin{align*}
\frac{\partial^k}{\partial x^k} \mu^{-1}(x) & =
\left\{
\begin{array}{ll}
d, & \text{if } k=1, \\
0, & \text{if } k>1.
\end{array}
\right. \nonumber \\
\frac{d^k}{d x^k} \phi'^{-1}(x) &=
\left\{
\begin{array}{ll}
\lambda, & \text{if } k=1, \\
0, & \text{if } k>1.
\end{array}
\right.
\end{align*}

Then, the expressions of
$\mu^{-1}\left(\pi_0 - \frac{p\pi_2}{1-p},\theta\right)$ and
$\phi'^{-1}\left(\frac{p\pi_2}{1-p}\right)$
become
\begin{align*}
\mu^{-1}\left(\pi_0 - \frac{p\pi_2}{1-p},\theta\right)
&=
\mu^{-1}(\pi_0,\theta) + d \frac{p\pi_2}{1-p} \\
&=
q^a(\theta) + d \frac{p\pi_2}{1-p},
\end{align*}
and
\begin{align*}
\phi'^{-1}\left(\frac{p\pi_2}{1-p}\right) =
\lambda \frac{p\pi_2}{1-p}.
\end{align*}

Using \eqref{eq:foc-f-1}, we get
\begin{align}
f^* = \mathbb E_{\theta} q^b(\theta) +
	\phi'^{-1}\left(\frac{p}{1-p}\pi_2\right).
	\label{eq:srepc}
\end{align}

By taking expectations,
\begin{align}
	\mathbb E_{\theta} q^b(\theta) =
	\mathbb E_{\theta} q^a(\theta) + d \frac{p\pi_2}{1-p}.
	\label{eq:condeqb}
\end{align}
and by substitution in \eqref{eq:srepc}, we obtain
\begin{align*}
f^* = \mathds E_{\theta} q^a(\theta) +
\left( d + \lambda \right)
\frac{p\pi_2}{1-p}.
\end{align*}

\subsection{Proof of Theorem~\ref{th:f-opt-ub-db}}

Define a family of penalty functions with deadband as follows:
\[
\phi_{\Delta}(x) =
\left\{
\begin{array}{ll}
	0, & \mathrm{if}\ |x| < \epsilon-\Delta, \\
	(|x|+\Delta-\epsilon)^3/(6\lambda\Delta),
	& \text{if}\ \epsilon-\Delta \leq |x| \leq \epsilon, \\
	\Delta^2/(6\lambda) + (|x|-\epsilon) \Delta/(2\lambda) \\
	+ (|x| - \epsilon)^2/(2\lambda),
	& \mathrm{if}\ |x| > \epsilon,
\end{array}
\right.
\]
for $0 \leq \Delta < \epsilon$. Note that $\phi_{\Delta}$ is continuous with
continuous derivatives up to second order for $0 < \Delta <\epsilon$,
and it approaches the penalty function $\phi$ given by \eqref{eq:penalty-db}
as $\Delta$ approaches zero, \emph{i.e.}
$\phi = \lim_{\Delta\rightarrow 0^+} \phi_{\Delta}$.

The derivative of $\phi_{\Delta}$ is:
\[
\phi_{\Delta}'(x) =
\left\{
\begin{array}{ll}
	0, & \mathrm{if}\ 0 \leq x < \epsilon-\Delta, \\
	(x+\Delta-\epsilon)^2/(2\lambda\Delta),
	& \text{if}\ \epsilon-\Delta \leq x \leq \epsilon, \\
	\Delta/(2\lambda) + (x - \epsilon)/\lambda,
	& \mathrm{if}\ x > \epsilon.
\end{array}
\right.
\]
for $x \geq 0$ and $\phi_{\Delta}'(x)=-\phi_{\Delta}'(-x)$ for $x \leq 0$,
which is invertible for any $x \neq 0$.

From the fact that this penalty function is double differentiable, the optimality conditions established before hold for this specific case as well. We do a case based analysis.

{\em Case $f^* \geq \mathds{E}_{\theta} q^a(\theta) + \epsilon$:} From Lemma \ref{lem:qa-qb-f} we have that $f^* \geq q^b(f^*,\theta) \ \forall \theta$. Then using \eqref{eq:foc-f-1}, the convexity of $\phi'$ for $x \geq 0$, that $f^* \geq q^b(f^*,\theta) \ \forall \theta$ and Jensen's inequality, we get
\begin{align*}
\phi'(\mathbb{E} f^* - q^b(f^*,\theta))  \leq \mathds{E}_{\theta} \phi'(f^* - q^b(f^*,\theta)) = \frac{p \pi_2}{1-p}.
\end{align*}

It is always possible to choose $\Delta$ such that
\[
	0 < \frac{\Delta}{2\lambda} < \frac{p\pi_2}{1-p}.
\]

For this value of $\Delta$, $\phi_{\Delta}'(x)$ is always restricted
to $x > \epsilon$.
Since $\phi'_{\Delta}$ is affine and invertible, we get
\[
	\phi_{\Delta}'^{-1}
	\left( \frac{p\pi_2}{1-p} \right) =
	\lambda \frac{p\pi_2}{1-p} - \frac{\Delta}{2} + \epsilon.
\]

Since $\mathbb{E} (f^* - q^b(f^*,\theta)) \geq 0$, the fact that $\phi'$ is increasing for $x\geq 0 $ and from the previous equation it follows that

\begin{align*}
f^* \leq \mathbb E_{\theta} q^b(\theta) +
	\phi'^{-1}_{\Delta} \left(\frac{p}{1-p}\pi_2\right),
\end{align*}
and substituting the value of $\mathbb E_{\theta} q^b(\theta)$ given
by \eqref{eq:condeqb}, we obtain
\begin{align*}
f^* \leq \mathbb E_{\theta} q^a(\theta) + \left( d + \lambda \right)
\frac{p\pi_2}{1-p} - \frac{\Delta}{2} + \epsilon.
\end{align*}

Hence, the result for the penalty function with deadband
$\phi$ defined in \eqref{eq:penalty-db} is obtained by taking
limit when $\Delta$ approaches zero,
\begin{align*}
f^* \leq \mathds E_{\theta} q^a(\theta) +
\left( d + \lambda \right)
\frac{p\pi_2}{1-p} + \epsilon.
\end{align*}

{\em Case $f^* < \mathds{E}_{\theta} q^a(\theta) + \epsilon$:} By this case it follows that
\begin{align*}
f^* \leq \mathds E_{\theta} q^a(\theta) +
\left( d + \lambda \right)
\frac{p\pi_2}{1-p} + \epsilon.
\end{align*}

{\em Individual Rationality}: For an $\epsilon$ such that  $\max\{q_{\max} - \mathds{E}_\theta q^a(\theta), \mathds{E}_\theta q^a(\theta) - q_{\min} \} \leq \epsilon$, the report $f = \mathds{E}_\theta q^a(\theta)$ is individually rational. Thus the optimal baseline report $f^*$ should be individually rational. Hence proved.

\subsection{Proof of Lemma~\ref{lem:caiso_adj}}
We start by showing that $f^c > f^-$. Recall that $f^-$ is the average of
consumption on the days prior to the previous $m$ non-event days. The
consumption on these days only appear in the denominator of the CAISO's
baseline estimate for future DR events. Hence, the incentive for the consumer is
to reduce the consumption on these days so as to inflate the baseline. On the
other hand, $f^c$ is the average of the consumption on the previous $m$
non-event days. And the consumption on these days only appear in the numerator
of the baseline estimate for any future DR events, through the term $f^c$.
Hence, the incentive for the consumer is to increase the consumption on these
days. Since everything else is the same for the day prior to the non-event
day and the non-event day except for this incentive to reduce and increase,
respectively, we conclude that $f^c > f^-$. This implies:
\begin{align*}
q^{-*}(\theta) > \mu^{-1}(\pi_0-\pi_2,\theta).
\end{align*}
Hence,
\begin{align*}
\bar f^c - q_a(\theta)
&>
\mu^{-1}(\pi_0-\pi_2,\theta) -
\mu^{-1}(\pi_0,\theta) \\
&= d\pi_2.
\end{align*}
Taking expectation with respect to $\theta$ we obtain
\begin{align*}
\mathds E_{\theta} (\bar f^c - q_a(\theta) ) > d\pi_2,
\end{align*}
and this completes the proof.
\end{document}